\documentclass[journal]{IEEEtran}

\usepackage{setspace}

%\doublespacing
%\singlespacing
%\onehalfspacing

\usepackage{graphics,
           psfrag,
           epsfig,
           amsthm,
           cite,
           amssymb,
           url,
           dsfont,
           subfigure,
           algorithm,
           algorithmic,
           balance,
           enumerate,
           color,
           setspace
           %algorithm2e
}
\usepackage{amsmath}%[centertags][tbtags][sumlimits][nosumlimits][intlimits][namelimits][nonamelimits]

\usepackage{epstopdf}

\newtheorem{definition}{Definition}

\newtheorem{lemma}{Lemma}

\newtheorem{theorem}{Theorem}

\newtheorem{proposition}{Proposition}

\newtheorem{remark}{Remark}

\newcommand{\eref}[1]{(\ref{#1})}
\newcommand{\sref}[1]{Section~\ref{#1}}
\newcommand{\appref}[1]{Appendix~\ref{#1}}
\newcommand{\fref}[1]{Fig.~\ref{#1}}

\newcommand{\pref}[1]{Proposition~\ref{#1}}
\newcommand{\cref}[1]{Constraint~\ref{#1}}
\newcommand{\thref}[1]{Theorem~\ref{#1}}
\newcommand{\lref}[1]{Lemma~\ref{#1}}

\newcommand{\rmref}[1]{Remark~\ref{#1}}
\newcommand{\algref}[1]{Algorithm~\ref{#1}}

% correct bad hyphenation here
\hyphenation{op-tical net-works semi-conduc-tor}

\newcommand{\ignore}[1]{}

%\renewcommand{\theequation}{\thesection.\arabic{equation}}%labels the equations by their section
%\numberwithin{equation}{section}% separate labeling of each section from the other

\epsfxsize=3.0in
\pagestyle{plain}%page style can be "plain","empty","headings"or"myheadings".
%"plain" is the default one for "article" and "report" document style. It numbers the pages.
%"empty" there is no numbering for pages.
%"headings" the page number and any other information provided by the document style is put at the top of the page.
%"myheadings" the same as "headings", except that the material to go at the top pf the page is determined by \markboth
%and \markright commands.
\IEEEoverridecommandlockouts
%\doublespacing
%*****************different kinds of text's shape**************
%\textup{...}
%\emph{...}
%\textit{...}
%\textbf{...}
%\textsf{...}
%\textsc{...}
%\texttt{...}
%\textrm{...}
%\textsl{...}
%All of the commands that begin with \text should be changed to begin with \math when applying them to
%mathematical expressions.
%
%**************************************************************
%*************Font sizes***************************************
%{\tiny ...},{\scriptsize ... },{\footnotesize ...},{\small ....},{\normalsize ...},{\large ...},
%{\Large ...},{\LARGE ...},{\huge ...},{\Huge ...}
%**************************************************************
%*************Quote command************************************
%`'
%\begin{quote}\end{quote}
%\begin{quotation}\end{quotation}
%**************************************************************

%\pagestyle{empty}

\addtolength{\textfloatsep}{-2mm}
\setlength{\abovedisplayskip}{0.5mm}
\setlength{\belowdisplayskip}{0.5mm}
\setlength{\abovecaptionskip}{0.5mm}
\setlength{\belowcaptionskip}{0.5mm}
\setlength{\floatsep}{1mm}

\begin{document}

\title{\vspace{-0.5cm}Delay Reduction in Multi-Hop Device-to-Device Communication using Network Coding}

\author{
   \authorblockN{Ahmed Douik, \textit{Student Member, IEEE}, Sameh Sorour, \textit{Member, IEEE}, Tareq Y. Al-Naffouri, \textit{Member, IEEE},\\ Hong-Chuan Yang, \textit{Senior Member, IEEE}, and Mohamed-Slim Alouini, \textit{Fellow, IEEE}}
    
    {\thanks {A part of this paper is accepted in IEEE International Symposium on Network Coding (NetCod' 2015), Sydney, Australia. The pre-print version is available at \cite{5456269}.
    
    Ahmed Douik and Mohamed-Slim Alouini are with Computer, Electrical and Mathematical Sciences and Engineering (CEMSE) Division at King Abdullah University of Science and Technology (KAUST), Thuwal, Makkah Province, Saudi Arabia, email: \{ahmed.douik,slim.alouini\}@kaust.edu.sa.

Sameh Sorour is with the Electrical Engineering Department, King Fahd University of Petroleum and Minerals (KFUPM), Dhahran, Eastern Province, Saudi Arabia, email:samehsorour@kfupm.edu.sa.

Tareq Y. Al-Naffouri is with both the CEMSE Division at King Abdullah University of Science and Technology (KAUST), Thuwal, Makkah Province, Saudi Arabia, and the Electrial Engineering Department at King Fahd University of Petroleum and Minerals (KFUPM), Dhahran, Eastern Province, Saudi Arabia, e-mail: tareq.alnaffouri@kaust.edu.sa.

Hong-Chuan Yang is with the Department of Electrical and Computer Engineering, University of Victoria (UVIC), British Columbia, Canada, email: hy@uvic.ca.
}}
\vspace{-0.8cm}    }

\maketitle

\IEEEoverridecommandlockouts

\begin{abstract}
This paper considers the problem of reducing the broadcast decoding delay of wireless networks using instantly decodable network coding (IDNC) based device-to-device (D2D) communications. In a D2D configuration, devices in the network can help hasten the recovery of the lost packets of other devices in their transmission range by sending network coded packets. Unlike previous works that assumed fully connected network, this paper proposes a partially connected configuration in which the decision should be made not only on the packet combinations but also on the set of transmitting devices. First, the different events occurring at each device are identified so as to derive an expression for the probability distribution of the decoding delay. The joint optimization problem over the set of transmitting devices and the packet combinations of each is, then, formulated. The optimal solution of the joint optimization problem is derived using a graph theory approach by introducing the cooperation graph and reformulating the problem as a maximum weight clique problem in which the weight of each vertex is the contribution of the device identified by the vertex. Through extensive simulations, the decoding delay experienced by all devices in the Point to Multi-Point (PMP) configuration, the fully connected D2D (FC-D2D) configuration and the more practical partially connected D2D (PC-D2D) configuration are compared. Numerical results suggest that the PC-D2D outperforms the FC-D2D and provides appreciable gain especially for poorly connected networks.
\end{abstract}

\begin{keywords}
Instantly decodable network coding, device-to-device communications, delay reduction, partially connected network, maximum weight clique problem.
\end{keywords}

\section{Introduction} \label{sec:intro}

\subsection{Background}

\emph{Network Coding (NC)} \cite{850663} is a promising technique to significantly improve the throughput (i.e., the network capacity) and to minimize delay over wireless erasure channels. These benefits are of great interest for the proliferation and spread of real-time applications which require quick and reliable packet transmission over lossy channels with low latency \cite{6283957}, such as streaming, cellular, satellite networks, and Internet television.

Two classes of NC can be distinguished in the literature, namely the Random Network Coding (RNC) \cite{1228459,4015713} and the Opportunistic Network Coding (ONC) \cite{4937121,1159942}. While the sender in RNC combines packets using random coefficients, ONC exploits the diversity of received and lost packets to generate the network encoded packet combinations. Even though RNC is optimal in reducing the number of transmissions even without feedback, it is not suitable for real-time applications since it does not enable progressive decoding of the frame. Moreover, unlike ONC, the computation complexity of RNC is prohibitive for real-time applications.

This paper is interested in delay sensitive broadcast applications in which each device should receive all the packets within a frame with minimum delay \cite{5753573}. A suitable technique for such applications is the Instantly Decodable Network Coding (IDNC) \cite{5425315}. In this subclass of ONC, the sender encodes the packets using binary XOR and devices decode them by the same means, which is an important property that ensures fast encoding/decoding and overcomes the computationally expensive matrix inversion operations. In addition, non-instantly decodable packets are not stored, which eliminates the need for buffers and allows the design of cost-efficient receivers. Thanks to its numerous benefits, IDNC was an subject of intensive research in the past few years \cite{5753573,6381028,4313060,5425315,5683677,6809217,6362857,5191398,164579,20112430,6570827}.

In the wireless medium, packet losses, occurring due to many phenomena (e.g., the mobility and the propagation environment), can be seen as packet erasures at higher communication layers \cite{1208721}. The erasure nature of the links affects the ability of devices to decode the information flow synchronously and thus affects the delivery of meaningful data. As a consequence, a better use of the channel does not usually translates to a better throughput at higher communication layers \cite{1208721} which motivates the definition of the various delay metric in NC. The commonly used delay metrics in IDNC are the completion time and the decoding delay. The former definition considers the delay as the overall transmission time and the latter as the individual delay when a transmitted packet brings no new information at its reception instant. This paper considers the minimization of the decoding delay as it represents a crucial step to study the completion time reduction \cite{13051412}.

\subsection{Related Work}

Determining the packet combinations for the whole recovery phase to optimally reduce the decoding delay is shown to be intractable even for erasure free \cite{6503457} or the off-line \cite{5205612} scenarios (i.e., scenarios in which the erasure events' are known in advance) with only three devices. In order to overcome the complexity, to the best of the authors' knowledge, the most adopted approach \cite{5425315,5683677,6809217,6362857,5191398,164579,20112430,6570827} is to study the on-line decoding delay. Therefore, the decoding delay minimization problem is formulated for each recovery transmission.

In all aforementioned works, the wireless centralized sender of a Point to Multi-Point (PMP) network (such as cellular, Wi-Fi and roadside to vehicle systems) is the only transmitter in charge of both the packet sending and recovery processes. This approach consumes a lot of the sender resources and threats its ability to deliver the packets with the desired rates. The problem is expected to further escalate in the next generation mobile radio system ($5$G), since the required data rates and the Quality of Service (QoS) requirements are becoming even more constraining \cite{6824752,6736746}. The notion of Device-to-device (D2D) communications, introduced in \cite{6404743}, is a suitable technique to overcome the problem. In a D2D configuration, devices can take care of the packet recovery process by exchanging packets with each other over short range and possibly more reliable channels. This D2D model also provides fast and secure data communications over ad-hoc networks, e.g., wireless sensor networks. Minimizing delays in such IDNC-based D2D systems is of great interest.

Unlike the PMP scenario in which the decision is made only on the packet combination to be transmitted, the decision in a D2D environment should be made also on the set of transmitting devices in every transmission to achieve the best network performance. Aboutorab et al. \cite{6620795} investigate the problem of minimizing the sum decoding delay in a centralized D2D environment, where the term centralized refers to the configuration in which a \emph{leader} takes all the decisions. The authors in \cite{6970860} extend the study to the imperfect feedback D2D networks. In a fully distributed D2D system, reference\cite{14122014} considers the completion time minimization using game theory as a tool to improve the distributed solution. The formulation is further extended to the decoding delay reduction in \cite{445913}.

These prior works on IDNC-based D2D systems assumes that the network is fully connected (FC-D2D), i.e., each device can target all other devices over one-hop transmission, and thus only one device can transmit at each transmission slot. This assumption of FC-D2D may not apply in some realistic scenarios due to the short device transmission ranges and their widespread over the large cell area. Longer-range D2D transmissions can easily limit the desirable property of more reliable communications between devices compared to those from the centralized sender. This work proposes to study of the decoding delay minimization in the case of partially connected D2D networks (PC-D2D). The partially connected configuration, in which the fully connected can be seen as a particular case, add a new dimension to the problem as many devices can communicate simultaneously each with different packet combination.

\subsection{Contribution}

The main contribution of this paper is to study the reduction of the decoding delay of IDNC in partially connected D2D networks. Whereas in fully connected D2D networks the problem of selecting the transmitting devices and the problem of selecting the packet combinations to be transmitted can be addressed separately, in partially connected systems the two problems are interdependent. Therefore, the optimization over the set of transmitting devices and their packet combinations should be addressed jointly. Although the paper focuses on a centralized solution, the decoding delay analysis presented in this paper will serve as a reference to future research direction on fully distributed systems since it provides a lower bound on the achievable decoding delay. Moreover, the decoding delay formulation is useful to study the completion time reduction problem \cite{13051412}.

The paper first identify the different expected decoding delay for each device in the network. These expressions are then exploited to formulate the minimum decoding delay problem in IDNC-based PC-D2D networks. To solve the problem, the paper first proposes a decoupling approach. The variables of the problem are shown to be separable in the interference-less scenario. The paper addresses the interference-less decoding delay reduction using a graph theory approach by introducing the local IDNC graph for packet generation and the cooperation graph and reformulating the problem as a maximum weight clique problem that can be efficiently solved using existing literature from graph theory \cite{16513519,13265492}. The solution in the interference-less scenario is further combined with a clustering technique to solve the original problem. Finally, the paper proposes a method to generate the minimum number of clusters to reach the optimal solution of the decoding delay reduction problem. Simulations results show that the proposed solution displays appreciable gain as compared with the fully connected D2D network approach and the PMP configuration.

The rest of this paper is organized as follows: \sref{sec:sys} introduces the system model and parameters. The decoding delay minimization problem in partially connected D2D network is formulated in \sref{sec:prob}. \sref{sec:no} solves the problem in the particular interference-less case. In \sref{sec:ext} the decoding delay minimization problem is solved. Simulation results are illustrated and discussed in \sref{sec:sim} before concluding in \sref{sec:conc}.

\section{System Model and Notations} \label{sec:sys}

\subsection{System Model and Parameters}

Consider a wireless base-station (BS) that desires to transmit $N$ different packets to a set $\mathcal{M}$ of $M$ device. Let $\mathcal{N}$ denote the frame consisting of the various source packets. The BS begins by broadcasting the $N$ source packets sequentially. Each device listens and sends an acknowledgement upon each successful reception. The probability of a packet loss at device $i$ when the BS is transmitting is $q_i,\ \forall \ i \in \mathcal{M}$. At the end of this initial phase, we assume that each packet of the frame $\mathcal{N}$ is acknowledged by at least one device. Otherwise, the BS keeps broadcasting the packet until the condition is verified. After the initialization phase, the packets of the frame can be in one of the following two sets for each device $i$:
\begin{itemize}
\item The Has set $\mathcal{H}_i$: the set of packets received by device $i$.
\item The Wants set $\mathcal{W}_i$: the set of packets lost by device $i$.
\end{itemize}

In the recovery phase, devices cooperate to ensure that everyone successfully receives all the $N$ packets. The transmitting devices and the packet combinations are chosen using the available information including the expected erasure patterns of the network links, and the diversity of received/lost packets across the network. In this phase, the coded packets can be one of the following options for device $i$:
\begin{itemize}
\item \emph{Non-innovative:} A packet is non-innovative for device $i$ if it does not contain \emph{any} packet from $\mathcal{W}_i$.
\item \emph{Instantly Decodable:} A packet is instantly decodable for device $i$ if it contains exactly \emph{one packet} from $\mathcal{W}_i$.
\item \emph{Non-Instantly Decodable:} A packet is non-instantly decodable for device $i$ if it contains more than one packet from $\mathcal{W}_i$.
\end{itemize}

The decoding delay \cite{5683677,14122014} is defined as follows:
\begin{definition}
At any recovery phase transmission, a device $i$, with non-empty Wants sets, experiences one unit increase of decoding delay if he cannot hear exactly one transmission or if he hears a packet that is either non-instantly decodable or non-innovative.
\end{definition}

Let $\mathbf{C}=[c_{ij}],\ \forall (i,j) \in \mathcal{M}^2$ be the connectivity matrix representing the connectivity between any pair of devices defined as follows:
\begin{align}
c_{ij} =
\begin{cases}
1 \hspace{0.1cm}& \text{if devices }i,j \text{ are connected}\\
0 \hspace{0.1cm}& \text{otherwise}
\end{cases}\quad, \forall\ (i,j) \in \mathcal{M}^2 \nonumber.
\end{align}

Note that $\mathbf{C}$ is a symmetric matrix that depends on the network topology (i.e., the relative positions of devices in the network). This paper considers a network with a general topology. However, each device should be able to target any other device through single or multi-hop transmission (via the intermediate nodes). In graph theory terms, the graph representing the devices is \emph{connected} or equivalently the matrix $\mathbf{C}$ is \emph{connected}.

Let $\mathcal{C}_i$ be the coverage zone of device $i$ defined as the set of devices in the transmission range of device $i$ (i.e., $\mathcal{C}_i = \{j \in \mathcal{M} \ | \ c_{ij}=1\}$), and let $\mathbf{P}=[p_{ij}],\ \forall (i,j) \in \mathcal{M}^2$ denote the packet erasure probability from device $i$ to device $j$.

\subsection{Notations}

The notation $\overline{X}$, where $X$ is a set in the ensemble $E$, refers to the complementary of the set, i.e., $\overline{X}=E \setminus X$. The notation $\mathcal{P}(E)$ refers to the power set of the ensemble $E$. In other words, $\mathcal{P}(E)$ is the set of all the subsets of the ensemble $E$. A partition $\left\{X_i\right\}_{1 \leq i \leq n}$ of the ensemble $E$ is denoted by $E = \bigoplus_{i = 1}^n X_i$. Let $|X|$ denote the cardinality of the set $X$.

\section{Minimum Decoding Delay Problem Formulation}\label{sec:prob}

This section first formulates the minimum decoding delay problem in partially connected D2D networks. The problem is expressed as a joint optimization over the set of transmitting devices and the packet combinations to be transmitted. Further, the section illustrates the optimal solution to the packet mix optimization for a fixed set of transmitting devices by introducing the local IDNC graph for packet generation.

\subsection{Problem Formulation}

Let $\mathcal{A} \in \mathcal{P}(\mathcal{M})$ be the set of transmitting devices and let $\mathcal{T}(\mathcal{A})$ be the set of non-transmitting devices in interference. In other words, $\mathcal{T}(\mathcal{A})$ is the set of devices that can hear multiple transmissions from the set of transmitting devices $\mathcal{A}$. The mathematical definition of this set is:
\begin{equation}
\mathcal{T}(\mathcal{A}) = \left\{i\notin \mathcal{A}~ \middle| ~\exists\ (m,n) \in \mathcal{A}^2,\ m \neq n, \ i \in \mathcal{C}_n \cap \mathcal{C}_m\right\}. \nonumber
\end{equation}
Define $\mathcal{S}(\mathcal{A})$ as the set of devices that are not in the transmission range of any transmitting device. Formally, the set is written as:
\begin{align}
\mathcal{S}(\mathcal{A}) = \left\{ i \in \mathcal{M}~ \middle| ~\nexists\ j \in \mathcal{A}, \ i \in \mathcal{C}_j\right\}.
\end{align}

Define the opportunity zone $\mathcal{O}_i(\mathcal{A})= \mathcal{C}_i \setminus \left( \mathcal{A} \cup \mathcal{T}(\mathcal{A}) \right)$ as the set of devices that can be targeted by device $i$ and can decode a packet from the transmission.

Let $M_w$ be the set of devices having non-empty Wants set and let $\kappa_i(\mathcal{A})$ be the packet combination to be transmitted by device $i \in \mathcal{A}$. Define $\tau_i(\kappa_i(\mathcal{A}),\mathcal{A})$ as the set of targeted devices by device $i$ when he is sending the packet combination $\kappa_i$ and all devices in $\mathcal{A}$ are transmitting.
\begin{remark}
The variables defined above should be all a function of the set of transmitting devices $\mathcal{A}$. However, for notation convenience, the set will be dropped unless it is required (e.g., we should write $\mathcal{T}$ and $\tau_i(\kappa_i)$ instead of $\mathcal{T}(\mathcal{A})$ and $\tau_i(\kappa_i(\mathcal{A}),\mathcal{A})$).
\label{rm1}
\end{remark}

The minimum decoding delay problem can be formulated as a joint optimization problem over the set of transmitting devices and their packet combinations as illustrated in the following theorem:
\begin{theorem}
The decoding delay reduction problem in partially connected D2D network can be formulated as follows:
\begin{subequations}
\label{original}
\begin{align}
&\max_{ \mathcal{A} \in \mathcal{P}(\mathcal{M}) } -|\mathcal{A} \cap M_w| - |\mathcal{T} \cap M_w| - |\mathcal{S} \cap M_w| \nonumber \\
& \qquad + \sum_{i \in \mathcal{A}} \left( \sum_{j \in \tau_{i}(\kappa_{i}^*) } 1-p_{ij} \right) \label{eq12} \\
&\text{subject to } \nonumber\\
&\kappa_{i}^*(\mathcal{A}) = \arg \max_{\kappa_{i} \in \mathcal{P}(\mathcal{H}_{i}) } \left( \sum_{j \in \tau_{i}(\kappa_{i}) } 1-p_{ij} \right), \forall \ i \in \mathcal{A}. \label{eq11}
\end{align}
\end{subequations}
\label{th1}
\end{theorem}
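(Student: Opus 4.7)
The plan is to express the per-slot expected decoding delay $E[D]$ as a sum over devices classified by their role relative to the transmitting set $\mathcal{A}$, rearrange it to exhibit the theorem's outer objective in \eref{eq12}, and then observe that the inner packet-combination problem decouples across transmitters to give the constraint in \eref{eq11}.

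I would first verify that $\mathcal{A}$, $\mathcal{T}(\mathcal{A})$, $\mathcal{S}(\mathcal{A})$ and the opportunity zones $\{\mathcal{O}_i(\mathcal{A})\}_{i\in\mathcal{A}}$ form a partition of $\mathcal{M}$. This follows directly from the definitions: $\mathcal{T}$ and every $\mathcal{O}_i$ exclude $\mathcal{A}$; $\mathcal{T}$ requires at least two covering transmitters while $\mathcal{S}$ requires none; two opportunity zones $\mathcal{O}_i,\mathcal{O}_k$ with $i\neq k$ cannot share a device, since that device would lie in $\mathcal{C}_i\cap\mathcal{C}_k$ and hence in $\mathcal{T}$; and any node outside $\mathcal{A}\cup\mathcal{T}\cup\mathcal{S}$ is covered by exactly one transmitter $i\in\mathcal{A}$ and therefore lies in the unique zone $\mathcal{O}_i$.

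Second, for each $j\in M_w$ I would compute the probability that $j$ increments its delay counter in the current slot via a case split on the partition. A half-duplex transmitter ($j\in\mathcal{A}\cap M_w$), an interfered receiver ($j\in\mathcal{T}\cap M_w$), and an uncovered receiver ($j\in\mathcal{S}\cap M_w$) each fail deterministically to hear exactly one useful packet, so the delay is one. For $j\in\mathcal{O}_i\cap M_w$, with $i$ the unique covering transmitter, $j$ avoids delay iff both the channel succeeds (probability $1-p_{ij}$) \emph{and} $\kappa_i$ is innovative and instantly decodable for $j$, i.e.\ $j\in\tau_i(\kappa_i)$; otherwise $j$ either misses the packet or receives a useless combination and also gains a unit delay. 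Summing over $M_w$ through the partition yields
\[
E[D] \;=\; |\mathcal{A}\cap M_w|+|\mathcal{T}\cap M_w|+|\mathcal{S}\cap M_w|+\sum_{i\in\mathcal{A}}|\mathcal{O}_i\cap M_w|-\sum_{i\in\mathcal{A}}\sum_{j\in\tau_i(\kappa_i)}(1-p_{ij}).
\]

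Third, combining the four cardinality terms via the partition identity $|M_w|=|\mathcal{A}\cap M_w|+|\mathcal{T}\cap M_w|+|\mathcal{S}\cap M_w|+\sum_{i\in\mathcal{A}}|\mathcal{O}_i\cap M_w|$ and dropping the $\mathcal{A}$-independent constant $|M_w|$ converts $\min E[D]$ into the outer maximization in \eref{eq12}. Since each $\tau_i(\kappa_i)$ depends only on the single combination $\kappa_i$ and the opportunity zones are pairwise disjoint, the joint maximum over $\{\kappa_i\}_{i\in\mathcal{A}}$ separates into the $|\mathcal{A}|$ independent sub-problems stated in \eref{eq11}. The main subtlety I expect is the case analysis of the delay indicator: specifically, recognizing that any $j\in\mathcal{O}_i\cap M_w$ that is \emph{not} in $\tau_i$ contributes a deterministic unit of delay (whether the packet is missed, non-innovative, or non-instantly-decodable), which is what collapses the reward cleanly to $\sum_{j\in\tau_i}(1-p_{ij})$; once this observation is in place, the remaining steps are algebraic bookkeeping and the separability argument.
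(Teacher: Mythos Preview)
Your overall strategy---partition $\mathcal{M}$ into $\mathcal{A}$, $\mathcal{T}$, $\mathcal{S}$ and the opportunity zones, compute per-device delay probabilities, sum, convert the minimization to a maximization, then separate the inner packet problem---is exactly the paper's approach. The gap is in your case analysis for devices in an opportunity zone.

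Under the paper's decoding-delay convention, an erasure is \emph{not} counted as delay: a device $j\in M_w$ incurs a unit of delay only if it is structurally unable to hear exactly one transmission (i.e., $j\in\mathcal{A}\cup\mathcal{T}\cup\mathcal{S}$) or it \emph{successfully receives} a non-innovative or non-instantly-decodable packet. Accordingly, for $j\in\mathcal{O}_i\cap M_w$ the paper assigns $\mathds{P}(d_j{=}1)=0$ when $j\in\tau_i(\kappa_i)$ and $\mathds{P}(d_j{=}1)=1-p_{ij}$ when $j\notin\tau_i(\kappa_i)$, so that
\[
\mathds{E}[D]=|\mathcal{A}\cap M_w|+|\mathcal{T}\cap M_w|+|\mathcal{S}\cap M_w|+\sum_{i\in\mathcal{A}}\ \sum_{j\in(\mathcal{O}_i\cap M_w)\setminus\tau_i(\kappa_i)}(1-p_{ij}),
\]
and the passage to \eref{eq12} is then negation (min $\to$ max) together with rewriting the last sum over $\tau_i$ rather than its complement in $\mathcal{O}_i\cap M_w$. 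Your assignment instead charges a \emph{deterministic} unit of delay to every non-targeted $j\in\mathcal{O}_i$ (and $p_{ij}$ to targeted ones), which inserts the extra term $\sum_{i\in\mathcal{A}}|\mathcal{O}_i\cap M_w|$ into your $E[D]$. When you then apply the partition identity in your third step, all four cardinality terms collapse to the constant $|M_w|$, and what survives is $\max_{\mathcal{A},\kappa}\sum_{i}\sum_{j\in\tau_i}(1-p_{ij})$: the three negative cardinality terms in \eref{eq12} vanish entirely under your accounting. So your step three does not actually land on \eref{eq12}. Correct the opportunity-zone probabilities to the paper's convention (erasures are free, non-targeted receptions cost $1-p_{ij}$) and the derivation aligns with the paper's; the partition argument and the separability of the inner problem are fine as you wrote them.
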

\begin{proof}
The proof can be found in \appref{ap7}.
\end{proof}

Let the objective function \eref{eq12} of the optimization problem \eref{original} be called the outer problem, and the constraint \eref{eq11} be called the inner problem. Finding the global optimal solution to the optimization problem \eref{original} may require a search over all the sets of transmitting devices and their possible packet combinations. Such solution is clearly infeasible for any reasonable sized network. In the next subsection, the paper proposes finding the optimal packet mix for a fixed set of transmitting devices, i.e., the optimal $\kappa_{i}^*, \ i \in \mathcal{A}$ for a fixed set $\mathcal{A}$. In other words, the global optimal solution to the inner problem \eref{eq11}.

\subsection{Local IDNC Graph for Packet Generation}

Let the set of transmitting devices be fixed to $\mathcal{A}$. To solve the optimization problem \eref{eq11}, this subsection rely on a graph theoretical model first introduced in the context of PMP networks in \cite{5683677}. Therefore, this section first extends the IDNC graph to account for the restriction in the packet generation and the partial connection of the graph. In this context, let such graph be called the \emph{local IDNC graph}. Finally, the paper reformulates the problem \eref{eq11} as a maximum weight clique problem in the local IDNC graph.

The IDNC graph is a tool introduced for the PMP model to determine both all possible XOR-based coded combinations, and the devices that can instantly decode each of them. In contract with PMP model that permit the generation of all the packet mixes, in D2D environment, each device can generate coding combinations only from the packets it possesses (i.e., packets in his Has set). Further, while the sender in the fully connected network can intend packets to all devices, in partially connected systems, each device can target only the devices in its transmission range. This subsection illustrates how the different devices can build similar local IDNC graphs. Naturally, the local graph depends on the set of transmitting devices. However following the simplification adopted in \rmref{rm1}, the set of transmitting users is dropped.

To construct the local IDNC graph $\mathcal{G}_i(\mathcal{V}_i, \mathcal{E}_i)$ of device $i \in \mathcal{A}$, a vertex $v_{kl} \in \mathcal{V}_i$ is generated for each packet $l \in (\mathcal{W}_k \cap \mathcal{H}_i),~ \forall~ k \in \mathcal{O}_i$. An edge in $\mathcal{E}_i$ connecting each vertices $v_{kl}$ and $v_{mn}$ is created if one of the two following conditions is true:
\begin{itemize}
\item $l=n \Rightarrow$ Packet $l$ is needed by both devices $k$ and $m$.
\item $l \in \mathcal{H}_m$ and $n \in \mathcal{H}_k \Rightarrow$ The packet combination $l\oplus n$ is instantly decodable for both devices $k$ and $m$.
\end{itemize}

The following lemma characterizes the solution of the decoding delay reduction problem \eref{eq11} for a fixed set of transmitting devices $\mathcal{A}$.

\begin{lemma}
The optimal solution $\kappa_i^*$ of the optimization problem \eref{eq11} for device $i \in \mathcal{A}$ is the maximum weight clique in the local IDNC graph $\mathcal{G}_i$ of device $i$ in which the weight of each vertex $v_{kl}$ is $1-p_{ik}$.
\label{l4}
\end{lemma}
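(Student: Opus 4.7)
The plan is to establish a bijection (up to weight) between cliques in the local IDNC graph $\mathcal{G}_i$ and the pairs (packet combination $\kappa_i \in \mathcal{P}(\mathcal{H}_i)$, targeted device set $\tau_i(\kappa_i)$), so that maximizing the total clique weight is the same as maximizing the objective in \eref{eq11}. The construction is modeled after the standard IDNC-graph argument of \cite{5683677}, but it must be adapted to the local setting, where only packets in $\mathcal{H}_i$ are available at the transmitter and only devices in the opportunity zone $\mathcal{O}_i$ can possibly be targeted.

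First I would take an arbitrary clique $\mathcal{K}\subseteq \mathcal{V}_i$ and consider the XOR combination $\kappa_i(\mathcal{K})=\bigoplus_{v_{kl}\in\mathcal{K}} l$. By construction every such $l$ belongs to $\mathcal{H}_i$, so $\kappa_i(\mathcal{K})\in\mathcal{P}(\mathcal{H}_i)$ is a valid transmission for device $i$. The key step is to show that for any vertex $v_{kl}\in\mathcal{K}$, device $k$ instantly decodes $\kappa_i(\mathcal{K})$, i.e.\ exactly one of the XORed packets lies in $\mathcal{W}_k$. The packet $l$ itself lies in $\mathcal{W}_k$ by definition of the vertex. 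For any other vertex $v_{mn}\in\mathcal{K}$, the clique condition provides an edge between $v_{kl}$ and $v_{mn}$, and the second edge rule ($l\in\mathcal{H}_m$ and $n\in\mathcal{H}_k$, using the fact that $l\neq n$ rules out the first rule unless $k=m$, which would force $l,n\in\mathcal{H}_k\cap\mathcal{W}_k=\emptyset$) guarantees $n\in\mathcal{H}_k$. Hence $l$ is the unique wanted packet that appears, and $\kappa_i(\mathcal{K})$ is instantly decodable by $k$. The same argument shows that distinct vertices in the clique correspond to distinct devices, so the weight $\sum_{v_{kl}\in\mathcal{K}}(1-p_{ik})$ is exactly $\sum_{k\in\tau_i(\kappa_i(\mathcal{K}))}(1-p_{ik})$ (intersected with $\mathcal{O}_i$, which is automatically the case since $\mathcal{V}_i$ is indexed by $\mathcal{O}_i$).

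Conversely, given any $\kappa_i\in\mathcal{P}(\mathcal{H}_i)$ and its targeted set $\tau_i(\kappa_i)\subseteq\mathcal{O}_i$, I would build a candidate clique by picking, for each $k\in\tau_i(\kappa_i)$, the unique vertex $v_{kl_k}$ where $l_k$ is the single packet in $\kappa_i\cap\mathcal{W}_k$ (existence and uniqueness follow from instant decodability). Verifying the edge conditions between any two chosen vertices $v_{kl_k}$ and $v_{ml_m}$ is then straightforward: either $l_k=l_m$, or $l_k\neq l_m$ in which case $l_k\in\kappa_i\setminus\mathcal{W}_m\subseteq\mathcal{H}_m$ and symmetrically $l_m\in\mathcal{H}_k$, matching the second rule. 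The resulting clique has weight exactly equal to the objective of \eref{eq11} evaluated at $\kappa_i$.

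Combining both directions shows $\max_{\kappa_i\in\mathcal{P}(\mathcal{H}_i)}\sum_{j\in\tau_i(\kappa_i)}(1-p_{ij})$ equals the maximum weight over cliques in $\mathcal{G}_i$, which gives the claim. The main obstacle, in my view, is purely bookkeeping: carefully checking that the local restrictions (only packets in $\mathcal{H}_i$, only targets in $\mathcal{O}_i$) are both necessary and sufficient so that no valid packet combination is lost by the vertex-indexing rule, and that vertices with repeated devices truly cannot coexist in a clique so the weight sum matches the intended per-device probability sum without double counting.
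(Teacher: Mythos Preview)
Your argument is correct and in fact more self-contained than the paper's own proof. The paper does not carry out the clique--combination bijection explicitly; instead it observes that device $i$ can only target devices in $\mathcal{O}_i$ and can only combine packets from $\mathcal{H}_i$, so the problem is identical to the PMP decoding-delay problem on the \emph{reduced} network with device set $\mathcal{M}'=\mathcal{O}_i$ and packet set $\mathcal{N}'=\mathcal{H}_i$. It then invokes the known PMP result (from \cite{6570827,5683677}) that the optimum there is the maximum weight clique in the IDNC graph, and notes that the IDNC graph of the reduced network is precisely the local IDNC graph $\mathcal{G}_i$. Your route, by contrast, reproves that correspondence from scratch for the local graph, which has the advantage of being independent of the cited references; the paper's route is shorter but relies on the reader accepting the PMP clique equivalence as a black box. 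One small point worth tightening in your forward direction: the devices appearing in an arbitrary clique $\mathcal{K}$ are only a \emph{subset} of $\tau_i(\kappa_i(\mathcal{K}))$ in general, so the clique weight is $\leq$ the objective at $\kappa_i(\mathcal{K})$ rather than equal; equality holds once $\mathcal{K}$ is maximal, which is all that is needed since the two inequalities you derive still pin down the common maximum.
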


\begin{proof}
The proof can be found in \appref{ap9}.
\end{proof}

Given the solution to the decoding delay reduction problem \eref{eq11} for a fixed set of transmitting devices, the global optimal solution to the decoding delay reduction problem \eref{original} can be obtained by solving problem \eref{eq11} for all possible set of transmitting devices. Since an exhaustive search over the set of devices $\mathcal{M}$ is needed, the complexity of such algorithm is in the order of $2^M \mathfrak{f}$, where $\mathfrak{f}$ is the complexity of solving \eref{eq11} (i.e., the complexity of solving the maximum weight clique problem). Therefore, such approach is clearly infeasible.

The rest of the paper shows an efficient method for reaching the global optimal solution of the optimization problem \eref{original}. In the next section, the optimization problem \eref{original} is globally solved when imposing restrictions on the set of transmitting devices that cooperation is allowed only when no device is in interference. In other words, the problem is solved under the constraint $\mathcal{T} = \varnothing$. The section solves the problem by introducing the \emph{cooperation graph} and reformulating the problem as a maximum weight clique problem in that graph. Finally, the paper shows that employing a clustering approach and using both the cooperation graph and the local IDNC graph, the global optimal solution of \eref{original} can be achieved.

\section{Interference-less Cooperative Solution}\label{sec:no}

Due to the interdependence of the set of transmitting devices and transmitted packets, the variables are not separable as shown in \fref{fig:network} which presents a system and its associated feedback matrix.

\begin{figure}[t]
\centering
% Requires \usepackage{graphicx}
\includegraphics[width=1\linewidth]{./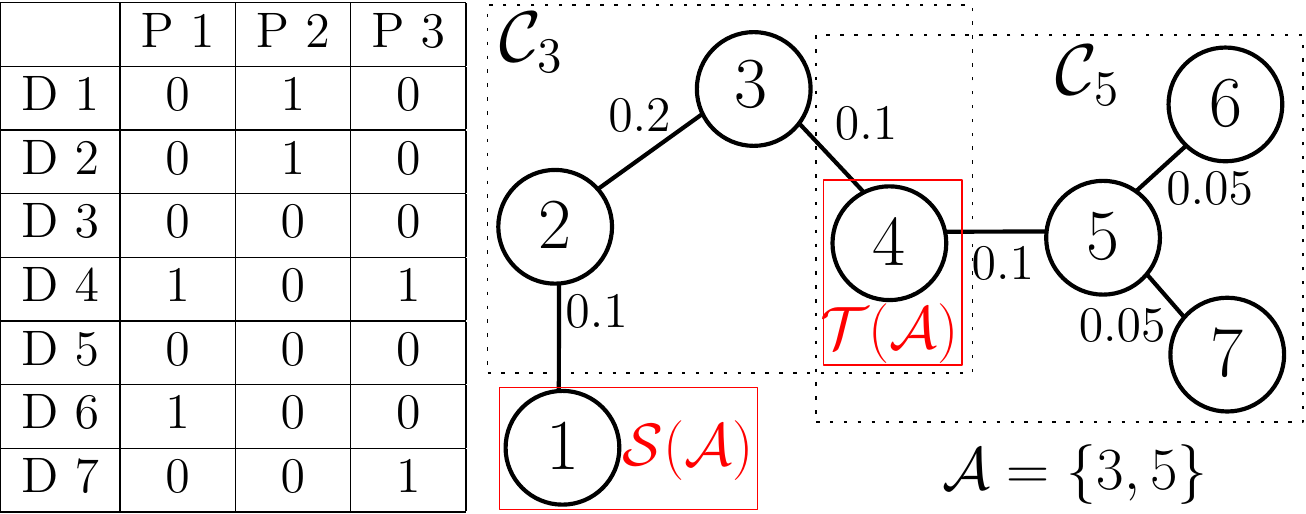}\\
\caption{Network composed of $7$ devices and $3$ packets. The feedback matrix represents the distribution of lost ($1$) and received packets ($0$) at each device. The erasure probabilities between devices is presented on the edges.}
\label{fig:network}
\end{figure}

Given the network configuration and the distribution of the lost/received packets, it can clearly be seen that only device $3$ and device $5$ can transmit a packet and ensure that it can be decoded by at least one device upon successful reception. When only device $3$ is transmitting, the optimal solution is to target device $4$ with packet $1$ or $3$. When only device $5$ is sending, the optimal solution is to target both devices $6$ and $7$ with the packet combination $1 \oplus 3$. It can be easily shown that the optimal schedule is that both:
\begin{itemize}
\item device $3$ targets device $2$ with packet $2$.
\item device $5$ targets devices $6$ and $7$ with the combination $1 \oplus 3$.
\end{itemize}

This solution shows the high interdependence between the variables. In this section, the decoding delay problem is relaxed by focusing on cooperation without interference. In other words, the cooperation between devices is allowed only when no device experiences interference (i.e., $\mathcal{T} = \varnothing$). This limitation makes the problem more mathematically tractable by allowing the decoupling of variables, which will serve as basis to solve the original problem. Hence, this section solves the following optimization problem:

\begin{align}
&\max_{ \mathcal{A} \in \mathcal{P}(\mathcal{M}) } -|\mathcal{A} \cap M_w|- |\mathcal{S} \cap M_w| + \sum_{i \in \mathcal{A}} \left( \sum_{j \in \tau_{i}(\kappa_{i}^*) } 1-p_{ij} \right) \nonumber \\
&\text{subject to }\label{reduced} \\
&\mathcal{T} = \varnothing \nonumber \\
&\kappa_{i}^*(\mathcal{A}) = \arg \max_{\kappa_{i} \in \mathcal{P}(\mathcal{H}_{i}) } \left( \sum_{j \in \tau_{i}(\kappa_{i}) } 1-p_{ij} \right), \forall \ i \in \mathcal{A}. \nonumber
\end{align}

To solve the optimization problem \eref{reduced}, this section first proposes reformulating the problem is a more tractable form. The rest of the section illustrates solve the problem by introducing the cooperation graph.

\subsection{Relaxed Problem Formulation}

The newly introduced constraint $\mathcal{T} = \varnothing$ limits the combinations of the transmitting devices. Let $\mathcal{I}$ be the set of possible combinations of devices that satisfy the constraint. This set can be expressed as follows:
\begin{align}
\mathcal{I} = \left\{ \mathcal{A} \in \mathcal{P}(\mathcal{M}) ~\middle| ~ \mathcal{C}_i \cap \mathcal{C}_j = \varnothing, \ \forall \ (i,j) \in \mathcal{A}^2 \right\}.
\end{align}

The following theorem reformulates the optimization problem \eref{reduced}:
\begin{theorem}
The decoding delay reduction problem in an interference-less IDNC-based D2D network \eref{reduced} can be expressed as:
\begin{align}
A^* = \arg \max_{ \mathcal{A} \in \mathcal{I} } -|\mathcal{A} \cap M_w| - |\mathcal{S} \cap M_w| + \sum_{i \in \mathcal{A}} y_{i}(\kappa_{i}^*) .
\label{eq3}
\end{align}
where
\begin{align}
y_{i}(\kappa_{i}^*) = \max_{\kappa \in \mathcal{G}_{i} } \left( \sum_{j \in \tau_{i}(\kappa) } 1-p_{ij} \right). \label{eq4}
\end{align}
\label{th2}
\end{theorem}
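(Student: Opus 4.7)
The plan is to derive \eref{eq3}--\eref{eq4} from problem \eref{reduced} in two essentially independent steps: first translating the interference-less constraint $\mathcal{T}=\varnothing$ into the combinatorial description of $\mathcal{I}$, and then re-expressing the inner packet-selection problem in \eref{reduced} as a maximum weight clique search over the local IDNC graph by invoking \lref{l4}.

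For the first step, I would unpack the definition of $\mathcal{T}(\mathcal{A})$ given in \sref{sec:prob}. A device $i \notin \mathcal{A}$ lies in $\mathcal{T}(\mathcal{A})$ iff there exist two distinct $m,n \in \mathcal{A}$ with $i \in \mathcal{C}_m \cap \mathcal{C}_n$. Hence $\mathcal{T}(\mathcal{A})=\varnothing$ is equivalent to $\mathcal{C}_m \cap \mathcal{C}_n = \varnothing$ for every distinct pair $(m,n) \in \mathcal{A}^2$ (using symmetry of $\mathbf{C}$ and the fact that $m \in \mathcal{C}_m$ is not relevant since the condition in $\mathcal{T}$ requires $i \notin \mathcal{A}$; any pairwise intersection would, moreover, force $m \in \mathcal{C}_n$ and vice versa, violating disjointness). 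This is exactly the membership condition of $\mathcal{I}$, so the feasible set $\{\mathcal{A} \in \mathcal{P}(\mathcal{M}) : \mathcal{T}(\mathcal{A}) = \varnothing\}$ coincides with $\mathcal{I}$, allowing me to drop the constraint and restrict the maximization in \eref{reduced} to $\mathcal{A} \in \mathcal{I}$. At the same time, under $\mathcal{T}=\varnothing$ the term $|\mathcal{T} \cap M_w|$ in the objective of \eref{original} vanishes, which matches the missing term in \eref{eq3}.

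For the second step, I would observe that for a fixed $\mathcal{A} \in \mathcal{I}$ the inner optimization \eref{eq11} decouples across $i \in \mathcal{A}$ because the opportunity zones $\mathcal{O}_i(\mathcal{A}) = \mathcal{C}_i \setminus \mathcal{A}$ (recall $\mathcal{T}=\varnothing$) are pairwise disjoint by definition of $\mathcal{I}$, so the choice of $\kappa_i$ affects only the devices in $\mathcal{O}_i(\mathcal{A})$ and is independent of $\kappa_j$ for $j \ne i$. Consequently the sum $\sum_{i \in \mathcal{A}}\bigl(\sum_{j \in \tau_i(\kappa_i^*)} 1 - p_{ij}\bigr)$ can be written as $\sum_{i \in \mathcal{A}} y_i(\kappa_i^*)$ where each $y_i(\kappa_i^*)$ is the maximum of $\sum_{j \in \tau_i(\kappa)} 1 - p_{ij}$ over all admissible packet combinations $\kappa \in \mathcal{P}(\mathcal{H}_i)$. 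Finally, \lref{l4} identifies this per-device maximum as the maximum weight clique in the local IDNC graph $\mathcal{G}_i$ with vertex weights $1-p_{ik}$, which justifies replacing $\mathcal{P}(\mathcal{H}_i)$ by $\mathcal{G}_i$ in \eref{eq4} and yields the claimed form.

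The only delicate point I anticipate is verifying the separability cleanly: one has to be sure that when $\mathcal{A} \in \mathcal{I}$ the target set $\tau_i(\kappa_i,\mathcal{A})$ indeed depends only on $\kappa_i$ and on $\mathcal{O}_i(\mathcal{A})$, and not on what the other transmitters in $\mathcal{A}$ choose to send. This is exactly where the interference-less constraint does its work, since a device in $\mathcal{O}_i(\mathcal{A})$ hears only device $i$, so whether it is targeted is determined by $\kappa_i$ alone. Once this decoupling is established, the rest of the proof is a direct rewriting of \eref{reduced} using \lref{l4}.
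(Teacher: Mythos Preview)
Your two-step plan is sensible, but the second step misses the actual content of the theorem. The inner problem in \eref{reduced} is \emph{already} stated as a separate maximization for each $i\in\mathcal{A}$, so the ``decoupling across $i$'' you spend effort on is not what needs to be proved. What \emph{does} need to be proved is that the per-device optimum $y_i$ (equivalently the local IDNC graph $\mathcal{G}_i$) does not depend on the transmitting set $\mathcal{A}$. You stop at $\mathcal{O}_i(\mathcal{A})=\mathcal{C}_i\setminus\mathcal{A}$, which still carries $\mathcal{A}$, and then invoke \lref{l4}; but \lref{l4} builds $\mathcal{G}_i$ from $\mathcal{O}_i(\mathcal{A})$, so as written your $y_i$ is still a function of $\mathcal{A}$ and the outer problem has not been reduced to the form \eref{eq3} with fixed per-vertex weights.

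The paper closes this gap via an auxiliary lemma (\lref{l3}) showing that, for any $\mathcal{A}\in\mathcal{I}$ and any $i\in\mathcal{A}$, one in fact has $\mathcal{O}_i(\mathcal{A})=\mathcal{C}_i\setminus\{i\}$: since the coverage zones in $\mathcal{A}$ are pairwise disjoint and $j\in\mathcal{C}_j$ for every $j$, no $j\in\mathcal{A}\setminus\{i\}$ can lie in $\mathcal{C}_i$, so $\mathcal{C}_i\setminus\mathcal{A}=\mathcal{C}_i\setminus\{i\}$. This is exactly the step that makes $\mathcal{G}_i$, $\kappa_i^*$, and $y_i$ independent of $\mathcal{A}$, and it is the substance of the decoupling claimed after \thref{th2}. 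Your ``delicate point'' paragraph worries about dependence on the other transmitters' \emph{packets} $\kappa_j$; the real delicate point is dependence on the other transmitters' \emph{identities}, and that is what you should address.
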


\begin{proof}
The proof can be found in \appref{ap8}.
\end{proof}

As hinted in \eref{eq3} and \eref{eq4}, the constraint $\mathcal{T} = \varnothing$ allows the decoupling of the variables $\mathcal{A}$ and $\kappa_i, \ i \in \mathcal{A}$. Clearly, the optimization problem \eref{eq4} is equivalent to the optimization problem \eref{eq11} that can be solved using the local IDNC graph. In the next subsection, the problem of selecting the set of transmitting devices \eref{eq3} is solved using a graph theoretic formulation.

\subsection{Cooperation Graph}

To solve the optimization problem \eref{eq3}, and equivalently the optimization problem \eref{reduced}, the paper characterizes the set of feasible device combinations $\mathcal{I}$. In general, two devices can transmit simultaneously if their coverage zones are mutually disjoint. Such combinations between the devices can be represented in a graph model called herein the \emph{cooperation graph}.

The cooperation graph is constructed by creating a vertex $v_i$ for each device in the network (i.e., $\forall \ i \in \mathcal{M}$). Two vertices $v_i$ and $v_j$ are connected by an edge if their coverage zone are disjoint. In other words, they are connected if $\mathcal{C}_i \cap \mathcal{C}_j = \varnothing$. The following theorem characterizes :

\begin{theorem}
The optimal solution to the relaxed decoding delay reduction problem in an interference-less IDNC-based D2D network \eref{reduced} is the maximum weight clique problem in the cooperation graph, in which the weight of each vertex $v_{i}$ is:
\begin{align}
v_i = |\mathcal{C}_{i} \cap M_w| - |\{i\}\cap M_w| + y_{i}(\kappa_{i}^*),
\end{align}
where $\kappa_{i}^*$ is the solution to the maximum weight clique problem in the local IDNC graph illustrated in \lref{l4}.
\label{th3}
\end{theorem}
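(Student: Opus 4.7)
The plan is to decompose Theorem 3 into two independent verifications: (i) the cliques of the cooperation graph are exactly the feasible sets $\mathcal{I}$ appearing in the reformulation \eref{eq3}, and (ii) the objective of \eref{eq3} differs from $\sum_{i \in \mathcal{A}} v_i$ only by an additive constant. Since the inner optimization over $\kappa_i$ has already been handled by \lref{l4} (which supplies $y_i(\kappa_i^*)$ as the value of a maximum weight clique in the local IDNC graph), Theorem 3 reduces purely to a combinatorial equivalence at the level of the outer problem over $\mathcal{A}$.

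For part (i), I would just unroll the two definitions: a clique in the cooperation graph is a vertex set whose pairwise coverage zones are disjoint, which is precisely the condition $\mathcal{C}_i\cap\mathcal{C}_j=\varnothing$ for all $i,j\in\mathcal{A}$ defining $\mathcal{I}$. So the feasible regions of the two optimizations coincide.

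For part (ii), the key identity is a partition-counting lemma: whenever $\mathcal{A}\in\mathcal{I}$ (so the zones $\{\mathcal{C}_i\}_{i\in\mathcal{A}}$ are pairwise disjoint), and assuming $i\in\mathcal{C}_i$ for every device, the family $\{\mathcal{C}_i\}_{i\in\mathcal{A}}\cup\{\mathcal{S}(\mathcal{A})\}$ forms a partition of $\mathcal{M}$. Intersecting this partition with $M_w$ yields
\begin{align}
|M_w| \;=\; |\mathcal{S}(\mathcal{A})\cap M_w| \;+\; \sum_{i\in\mathcal{A}} |\mathcal{C}_i\cap M_w|.
\end{align}
Substituting this into $\sum_{i\in\mathcal{A}} v_i$ and using $\sum_{i\in\mathcal{A}}|\{i\}\cap M_w|=|\mathcal{A}\cap M_w|$, I expect to obtain
\begin{align}
\sum_{i\in\mathcal{A}} v_i \;=\; |M_w| \;-\; |\mathcal{S}(\mathcal{A})\cap M_w| \;-\; |\mathcal{A}\cap M_w| \;+\; \sum_{i\in\mathcal{A}} y_i(\kappa_i^*),
\end{align}
so the clique weight equals the objective of \eref{eq3} up to the $\mathcal{A}$-independent constant $|M_w|$. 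The two problems therefore share the same argmax.

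Finally I would put the pieces together: by part (i), optimizing over $\mathcal{A}\in\mathcal{I}$ in \eref{eq3} is the same as optimizing over cliques of the cooperation graph; by part (ii) and \lref{l4}, the objective of \eref{eq3} coincides (modulo a constant) with the sum of the stated vertex weights $v_i$. The main obstacle I anticipate is justifying the partition identity cleanly: it relies on the implicit assumption that every device lies in its own coverage zone, and it fails without the disjointness of the zones guaranteed by $\mathcal{T}(\mathcal{A})=\varnothing$. Everything else is bookkeeping and a direct appeal to \lref{l4} to handle the inner packet-selection problem.
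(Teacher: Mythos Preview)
Your proposal is correct and follows essentially the same route as the paper: both arguments establish the bijection between $\mathcal{I}$ and the cliques of the cooperation graph by unrolling the definitions, and both reduce the objective of \eref{eq3} to $\sum_{i\in\mathcal{A}} v_i$ via the identity $|\mathcal{S}\cap M_w|=|M_w|-\sum_{i\in\mathcal{A}}|\mathcal{C}_i\cap M_w|$, which the paper derives by set algebra and you phrase as a partition-counting lemma. One small remark: the partition $\mathcal{M}=\mathcal{S}(\mathcal{A})\sqcup\bigsqcup_{i\in\mathcal{A}}\mathcal{C}_i$ holds directly from the definition $\mathcal{S}(\mathcal{A})=\mathcal{M}\setminus\bigcup_{i\in\mathcal{A}}\mathcal{C}_i$ together with pairwise disjointness of the $\mathcal{C}_i$, so the assumption $i\in\mathcal{C}_i$ you flag as a potential obstacle is not actually needed for this step.
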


\begin{proof}
The proof can be found in \appref{ap3}.
\end{proof}

\section{Cooperative Decoding Delay Reduction}\label{sec:ext}

This section proposes to solve the decoding delay minimization problem \eref{original} by introduction clustering of devices and using the tools developed in the previous sections. The fundamental concept for solving problem \eref{original} is to generate groups of non-interfering devices (called herein clusters) and to extend the cooperative graph formulation. Each cluster can be seen as a new ``virtual" device in the network. Since the clusters are, by construction, non-interfering clusters then results of the previous section hold. Therefore, this section first shows the way of constructing such clusters then it extends the cooperative graph with the new virtual devices. Afterwards, the optimization problem \eref{original} is shown to be equivalent to a maximum weight clique problem that can be efficiently solved using existing literature from graph theory \cite{16513519,13265492}. Finally, the paper demonstrate that generating only a well-defined subset of the available clusters is sufficient to reach the optimal solution of \eref{original} with a lower complexity.

\subsection{Cluster Generation}

Let $\mathcal{A} \in \mathcal{P}(\mathcal{M})$ be a set of transmitting devices. This subsection illustrates the construction of a set of clusters $\mathcal{Z}$, called herein a clustering, that uniquely represents $\mathcal{A}$ (i.e., $\mathcal{Z}$ is a partition of $\mathcal{A}$). Each cluster $Z \in \mathcal{Z}$ (i.e., $Z \subseteq \mathcal{A}$) is seen as a vertex in the cooperation graph and finally the uniqueness of the clustering $\mathcal{Z}$ allows the reformulation of the problem in the next subsections.

Let $\mathcal{Z}$ be a clustering (i.e., a partition) of $\mathcal{A}$ such that:
\begin{enumerate}
\item All the coverage zones of clusters $Z \in \mathcal{Z}$ are pairwise disjoint.
\item Within the same cluster $Z \in \mathcal{Z}$, each subset of devices is interfering with at least another device.
\end{enumerate}
The mathematical definition of such a clustering $\mathcal{Z}$ is:
\begin{subequations}
\begin{align}
\bigoplus_{Z \in \mathcal{Z}} Z &= \mathcal{A} \label{eq14} \\
\mathcal{C}^T (Z) \cap \mathcal{C}^T (Z^\prime) &= \varnothing,\ \forall\ Z \neq Z^\prime \in \mathcal{Z} \label{eq5}\\
\mathcal{C}^T(z) \cap \mathcal{C}^T (Z \setminus z ) &\neq \varnothing,\ \forall \ z \subset Z, Z \in \mathcal{Z}, \label{eq6}
\end{align}
\end{subequations}
where $\mathcal{C}^T(X)$ is the total coverage zone of all the devices in the set $X$ defined as $\mathcal{C}^T(X) = \bigcup_{x \in X} \mathcal{C}_x$.

The following lemma states the uniqueness of the clustering $\mathcal{Z}$ for any set of transmitting devices $\mathcal{A}$.
\begin{lemma}
For any combination of transmitting devices $\mathcal{A}$, there exist a unique clustering $\mathcal{Z}$ satisfying the constraints \eref{eq14}, \eref{eq5} and \eref{eq6} simultaneously.
\label{lem1}
\end{lemma}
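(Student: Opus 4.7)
My plan is to exhibit the unique clustering explicitly as the set of connected components of an auxiliary ``interference graph'' $H$ defined on vertex set $\mathcal{A}$, where two distinct devices $i, j \in \mathcal{A}$ are adjacent whenever $\mathcal{C}_i \cap \mathcal{C}_j \neq \varnothing$. Call the resulting partition $\mathcal{Z}_0$. I will then argue existence (that $\mathcal{Z}_0$ itself satisfies the three conditions) and uniqueness (that any admissible clustering must coincide with $\mathcal{Z}_0$).

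For existence, I would verify each of \eref{eq14}, \eref{eq5} and \eref{eq6} in turn. The partition property \eref{eq14} is immediate because connected components always partition the vertex set. Condition \eref{eq5} would be established by contrapositive: any point shared between $\mathcal{C}^T(Z)$ and $\mathcal{C}^T(Z')$ for distinct components would, by the definition of $\mathcal{C}^T(\cdot)$, yield some $i \in Z$ and $j \in Z'$ with $\mathcal{C}_i \cap \mathcal{C}_j \neq \varnothing$, placing $i$ and $j$ in the same $H$-component. Condition \eref{eq6} would follow because, within a connected component $Z$, every nontrivial bipartition $Z = z \cup (Z \setminus z)$ must be crossed by at least one edge of $H$, and such an edge immediately exhibits a common element of $\mathcal{C}^T(z)$ and $\mathcal{C}^T(Z \setminus z)$.

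For uniqueness, I would take any clustering $\mathcal{Z}$ satisfying the three conditions and show that each $Z \in \mathcal{Z}$ is in fact a connected component of $H$. Condition \eref{eq6} forces $Z$ to be $H$-connected, since an internal disconnection $Z = z \sqcup (Z \setminus z)$ with no $H$-edge across the cut would force $\mathcal{C}^T(z) \cap \mathcal{C}^T(Z \setminus z) = \varnothing$. Condition \eref{eq5} forces $Z$ to be maximal, since any device $k$ lying in another cluster $Z''$ but in the same $H$-component as $Z$ would, along a shortest $H$-path from $Z$ to $k$, produce two adjacent vertices, one in $Z$ and one in $Z''$, contradicting $\mathcal{C}^T(Z) \cap \mathcal{C}^T(Z'') = \varnothing$. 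Combining both observations gives $\mathcal{Z} = \mathcal{Z}_0$.

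The main obstacle is conceptual rather than technical: the insight is that conditions \eref{eq5} and \eref{eq6} act as dual constraints that respectively forbid the merging and the splitting of $H$-components, and that together they pin down the connected-component partition uniquely. Once this viewpoint is adopted, the proof reduces to a short, essentially mechanical verification that demands no intricate calculation.
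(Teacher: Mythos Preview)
Your proof is correct and, at the mathematical level, coincides with the paper's argument: the clusters are exactly the connected components of the interference graph $H$ you define. The paper, however, never names this graph explicitly. For existence it presents a sequential algorithm (\algref{algo1}) that picks a seed device and repeatedly absorbs any device whose coverage zone meets the current cluster's total coverage---which is just an iterative computation of a connected component of $H$. For uniqueness it fixes a device $i$ lying in clusters $Z\in\mathcal{Z}$ and $Z'\in\mathcal{Z}'$ and builds a set $K$ by the same absorption rule starting from $i$; it then uses \eref{eq5} to show $K\subseteq Z$ and $K\subseteq Z'$ and \eref{eq6} to rule out $K\subsetneq Z$ or $K\subsetneq Z'$. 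Your version has the advantage of making the underlying structure explicit, so that the duality you highlight (\eref{eq5} forbids merging components, \eref{eq6} forbids splitting them) is visible in one line; the paper's version has the minor practical advantage of simultaneously supplying a concrete procedure to compute the clustering.
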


\begin{proof}
The proof can be found in \appref{ap4}.
\end{proof}

\subsection{Full Cooperation Graph} \label{cooper}

In this subsection, the cooperation graph is extended to include the cluster that can be seen as ``virtual" devices in the network. The fundamental concept in constructing such a graph is to preserve all the benefits of the cooperation graph while allowing all the possible combinations of transmitting devices. This goal is reached by generating particular clusters and then connecting each pair of clusters that are not interfering with each other. Let $\mathbf{Z}$ be the set of clusters satisfying the constraint \eref{eq6} defined as follows:
\begin{align}
\mathbf{Z} =& \{ Z \in \mathcal{P}(\mathcal{M}) \ | \ \mathcal{C}^T(z) \cap \mathcal{C}^T (Z \setminus z) \neq \varnothing,\ \forall \ z \subset Z\}.
\end{align}

The full cooperation graph is constructed by generating a vertex $v$ for each cluster $Z \in \mathbf{Z}$. Two vertices $v$ and $v^\prime$ representing the clusters $Z$ and $Z^\prime$ are connected if they are non-interfering clusters. In other words, they are connected if they satisfy constraint \eref{eq5}:
\begin{align}
\mathcal{C}^T (Z) \cap \mathcal{C}^T (Z^\prime) &= \varnothing.
\end{align}

Let $\mathfrak{Z}$ be the set of cliques in the full cooperation graph. The following proposition links the set of cliques to the set of transmitting devices.
\begin{proposition}
There exists a one to one mapping between the set $\mathcal{P}(\mathcal{M})$ and $\mathfrak{Z}$. In other words, for each $\mathcal{A} \in \mathcal{P}(\mathcal{M})$ there exist a unique representative $\mathcal{Z} \in \mathfrak{Z}$ and inversely.
\label{pro1}
\end{proposition}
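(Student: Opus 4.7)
The plan is to construct two maps $\Phi : \mathcal{P}(\mathcal{M}) \to \mathfrak{Z}$ and $\Psi : \mathfrak{Z} \to \mathcal{P}(\mathcal{M})$ explicitly, and show they are mutual inverses. Lemma \ref{lem1} will do most of the work: for every $\mathcal{A}$ it hands us a canonical clustering $\mathcal{Z}$, and its uniqueness clause is what lets us reverse the process.

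First I would define $\Phi(\mathcal{A})$ to be the unique clustering $\mathcal{Z}$ of $\mathcal{A}$ supplied by \lref{lem1}. To see that $\mathcal{Z}$ is indeed a clique of the full cooperation graph, I check the two membership conditions in \sref{cooper}: constraint \eref{eq6} says exactly that every $Z \in \mathcal{Z}$ lies in $\mathbf{Z}$ (so each is a vertex), and constraint \eref{eq5} says $\mathcal{C}^T(Z) \cap \mathcal{C}^T(Z^\prime) = \varnothing$ for any two distinct clusters, which is precisely the edge rule of the full cooperation graph. Hence $\mathcal{Z} \in \mathfrak{Z}$ and $\Phi$ is well-defined.

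Next I would define $\Psi(\mathcal{Z}) = \bigcup_{Z \in \mathcal{Z}} Z$, which is trivially a subset of $\mathcal{M}$. The easy direction $\Psi \circ \Phi = \mathrm{id}$ is immediate: by \eref{eq14} the clustering produced by $\Phi$ partitions $\mathcal{A}$, so reuniting its blocks recovers $\mathcal{A}$.

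The main obstacle is the other direction $\Phi \circ \Psi = \mathrm{id}$. Given any clique $\mathcal{Z} \in \mathfrak{Z}$, set $\mathcal{A} = \Psi(\mathcal{Z})$; I must argue that $\mathcal{Z}$ is itself the unique clustering of $\mathcal{A}$ furnished by \lref{lem1}, and then invoke that uniqueness. Constraint \eref{eq6} holds because each $Z \in \mathcal{Z}$ is a vertex of the full cooperation graph, hence an element of $\mathbf{Z}$. Constraint \eref{eq5} holds because $\mathcal{Z}$ is a clique, so any two distinct clusters are joined by an edge and thus have disjoint total coverage zones. The delicate point is \eref{eq14}: I need the clusters to partition $\mathcal{A}$, not merely to cover it. Pairwise disjointness follows from the inclusion $Z \subseteq \mathcal{C}^T(Z)$ (each device lies in its own coverage zone, since $c_{ii} = 1$), combined with \eref{eq5}: for $Z \neq Z^\prime$ in $\mathcal{Z}$,
\begin{equation}
Z \cap Z^\prime \subseteq \mathcal{C}^T(Z) \cap \mathcal{C}^T(Z^\prime) = \varnothing. \nonumber
\end{equation}
Together with $\bigcup_{Z \in \mathcal{Z}} Z = \mathcal{A}$ this gives \eref{eq14}. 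All three constraints being satisfied, the uniqueness part of \lref{lem1} forces $\Phi(\mathcal{A}) = \mathcal{Z}$, completing the proof.
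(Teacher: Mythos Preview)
Your proof is correct and follows essentially the same route as the paper: use \lref{lem1} to map each $\mathcal{A}$ to a clique, and map each clique back by taking the union of its clusters, invoking the uniqueness in \lref{lem1} to close the loop. If anything, your version is slightly more careful, since you explicitly verify the pairwise disjointness needed for \eref{eq14} via $Z \subseteq \mathcal{C}^T(Z)$, whereas the paper simply asserts $\mathcal{A} = \bigoplus_{Z \in \mathcal{Z}} Z$ without that check.
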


\begin{proof}
The proof of this proposition is based on the result provided in \lref{lem1}. For a clique $\mathcal{Z} \in \mathfrak{Z}$ it is straightforward to conclude that there is a unique $\mathcal{A} \in \mathcal{P}(\mathcal{M})$ that represent it. In fact, by the construction of the vertices $Z$ they satisfy constraint \eref{eq6}. Further since $\mathcal{Z}$ is a clique then all the vertices are connected, and hence they verify constraint \eref{eq5}. For $\mathcal{A} = \bigoplus_{Z \in \mathcal{Z}} Z$ the mapping is unique. Conversely, let $\mathcal{A}$ be a set of transmitting devices. From \lref{lem1}, there exist a unique $\mathcal{Z}$ that satisfy the constraints \eref{eq14}, \eref{eq5}, and \eref{eq6}. Since all the cluster $Z \in \mathcal{Z}$ satisfy \eref{eq5} and \eref{eq6} then they are generated in the graph and they are connected. In other words, $\mathcal{Z}$ is a clique which concludes the proof.
\end{proof}

\subsection{Decoding Delay Reduction}

This subsection first reformulates the optimization problem \eref{original} in a more tractable form based on the results provided in \pref{pro1}. Afterwards, the problem is shown to be equivalent to a maximum weight clique problem in the full cooperation graph. Finally, next section provides a low complexity solving method that relies on the generation of a well-defined set of clusters to construct the cooperation graph. Given the one to one mapping between the set of clique and the set of transmitting devices, the optimization problem \eref{original} can be rewritten as follows:
\begin{subequations}
\label{reformulation}
\begin{align}
&\max_{ \mathcal{Z} \in \mathfrak{Z} } -|\mathcal{Z} \cap M_w| - |\mathcal{T} \cap M_w| - |\mathcal{S} \cap M_w| \nonumber \\
& \qquad + \sum_{Z \in \mathcal{Z}} \sum_{i \in Z}\left( \sum_{j \in \tau_{k}(\kappa_{i}^*) } 1-p_{ij} \right) \\
&\text{subject to } \kappa_{i}^*(\mathcal{Z}) = \arg \max_{\kappa_{i} \in \mathcal{P}(\mathcal{H}_{i}) } \left( \sum_{j \in \tau_{i}(\kappa_{i}) } 1-p_{ij} \right)
\end{align}
\end{subequations}

The following theorem characterizes the global optimal solution to the decoding delay reduction in IDNC-based D2D network \eref{original}:
\begin{theorem}
The optimal solution to the decoding delay reduction problem in partially connected IDNC-based D2D network \eref{original} is the maximum weight clique problem in the cooperation graph, in which the weight of each vertex $v$ representing the cluster $Z$ is:
\begin{align}
v &= |\mathcal{C}^T(Z) \cap M_w| - | Z \cap M_w| - |\mathcal{T}(Z) \cap M_w| \nonumber \\
& + \sum_{i \in Z}\left( \sum_{j \in \tau_{i}(\kappa_{i}^*) } 1-p_{ij} \right) ,
\end{align}
where $\kappa_{k}^*$ is the maximum weight clique problem in the local IDNC graph for packet generation:
\begin{align}
\kappa_{i}^* = \arg \max_{\kappa \in \mathcal{G}_i} \left( \sum_{j \in \tau_{i}(\kappa) } 1-p_{ij} \right).
\end{align}
\label{th4}
\end{theorem}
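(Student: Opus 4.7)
The plan is to build on Proposition~\ref{pro1}, which already recasts the outer search space from $\mathcal{P}(\mathcal{M})$ to the set of cliques $\mathfrak{Z}$ in the full cooperation graph, and then show that the objective in \eref{reformulation} decomposes additively over the clusters of the clique. Once such a decomposition is established, the per-cluster contribution becomes the vertex weight, and the outer problem reduces to a maximum weight clique problem on the full cooperation graph; the inner problem over $\kappa_i$ is then handled by \lref{l4}.

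The core step is the additive decomposition. Given a clique $\mathcal{Z} \in \mathfrak{Z}$ with associated transmitter set $\mathcal{A} = \bigoplus_{Z \in \mathcal{Z}} Z$, I would first exploit the defining edge condition \eref{eq5}, namely $\mathcal{C}^T(Z) \cap \mathcal{C}^T(Z') = \varnothing$ for distinct $Z, Z' \in \mathcal{Z}$, to conclude that $\mathcal{C}^T(\mathcal{A}) = \bigcup_{Z \in \mathcal{Z}} \mathcal{C}^T(Z)$ is a disjoint union. This immediately gives
\begin{align*}
|\mathcal{S}(\mathcal{A}) \cap M_w| &= |M_w| - \sum_{Z \in \mathcal{Z}} |\mathcal{C}^T(Z) \cap M_w|.
\end{align*}
A device in $\mathcal{T}(\mathcal{A})$ must lie in $\mathcal{C}^T$ of at least two distinct transmitters; because coverage zones of distinct clusters are disjoint, those two transmitters necessarily belong to the \emph{same} cluster, hence $\mathcal{T}(\mathcal{A}) = \bigoplus_{Z \in \mathcal{Z}} \mathcal{T}(Z)$ (again disjoint by \eref{eq5}). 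Finally $|\mathcal{A} \cap M_w| = \sum_{Z} |Z \cap M_w|$ is trivial from $\mathcal{A} = \bigoplus Z$, and the transmission-success term is already indexed by $i \in Z$. Substituting the three identities into \eref{reformulation} and dropping the constant $-|M_w|$ yields exactly
\begin{align*}
\sum_{Z \in \mathcal{Z}} \Bigl[ |\mathcal{C}^T(Z) \cap M_w| - |Z \cap M_w| - |\mathcal{T}(Z) \cap M_w| + \sum_{i \in Z}\sum_{j \in \tau_i(\kappa_i^*)} (1-p_{ij}) \Bigr],
\end{align*}
which is the vertex weight $v$ claimed in the theorem, summed over the chosen clique.

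With the outer objective written as an additive sum of per-vertex weights over a clique, the outer problem is by definition a maximum weight clique problem on the full cooperation graph. For the inner problem, note that once a cluster $Z$ is fixed, the targeted set $\tau_i(\kappa_i)$ of each device $i \in Z$ depends only on $Z$ (through the opportunity zone $\mathcal{O}_i$ determined by the cluster structure, since $\mathcal{T}$ and $\mathcal{A}$ restricted to the neighborhood of $i$ are determined by $Z$). Hence the constraint \eref{eq11} reduces, cluster by cluster and device by device, to the packet-generation problem solved in \lref{l4}, producing the stated $\kappa_i^*$ as the maximum weight clique in the local IDNC graph $\mathcal{G}_i$ with vertex weights $1 - p_{ik}$.

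The main obstacle I anticipate is the decomposition of $\mathcal{T}(\mathcal{A})$ across clusters: one must rule out ``cross-cluster'' interferers by carefully combining the disjointness of coverage zones \eref{eq5} with the definition of $\mathcal{T}$. A secondary subtlety is verifying that the opportunity zone $\mathcal{O}_i(\mathcal{A})$, and hence the local IDNC graph used by device $i$, is intrinsic to its cluster $Z$ rather than to the global $\mathcal{A}$; this again follows from \eref{eq5} because any device outside $\mathcal{C}^T(Z)$ is either in $\mathcal{S}(\mathcal{A})$ or in the coverage of a different cluster and therefore cannot be reached by $i$. Once these two points are cleanly argued, the rest of the proof is bookkeeping and a direct invocation of \lref{l4}.
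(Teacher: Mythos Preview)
Your proposal is correct and follows essentially the same route as the paper's proof: the paper also uses \pref{pro1} to move to $\mathfrak{Z}$, then decomposes $|\mathcal{S}\cap M_w|$, $|\mathcal{T}\cap M_w|$, and $|\mathcal{A}\cap M_w|$ additively over clusters via \eref{eq5}, drops the constant $|M_w|$, and invokes the clique argument of \thref{th3} together with \lref{l4} for the inner problem. Your explicit flagging of the two subtleties---that cross-cluster interferers are ruled out by \eref{eq5}, and that $\mathcal{O}_i(\mathcal{A})$ is intrinsic to the cluster containing $i$---is in fact slightly more careful than the paper's own treatment, which leaves the second point implicit.
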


\begin{proof}
The proof can be found in \appref{ap5}.
\end{proof}

\subsection{Low Complexity Solving Method}

This subsection presents a low complexity for solving the decoding delay minimization problem \eref{original}. As shown in the previous paragraph, the optimal solution can be represented as a maximum weight clique in the cooperation graph. This subsection answers to the question: Are all the clusters $Z \in \mathbf{Z}$ needed in the generation of the cooperation graph to achieve the optimal solution to the optimization problem? The paper answers the question by introducing a sequential low complexity method for constructing the cooperation graph by eliminating the vertices that are surely not part of the maximum weight clique.

Let $Z \in \mathbf{Z}$ be a cluster and let $k \notin Z$ be a device such that:
\begin{align}
\mathcal{C}_k \cap \mathcal{C}^T(Z) \neq \varnothing.
\label{eq16}
\end{align}

Let $\mathcal{Z} \in \mathfrak{Z}$ be a clustering (i.e., a clique in the cooperation graph) such that $Z \in \mathcal{Z}$. First note that $k \notin Z^\prime, \ \forall \ Z^\prime \in \mathcal{Z}$. Otherwise, since $k$ satisfy \eref{eq16} then $\mathcal{Z}$ would violate \eref{eq5} and hence it is not a clique. This subsection first exhibits the condition under which the consideration of clusters that include device $k$ would be beneficial. Afterwards, it proposes a method for generating only such clusters.

Let $\mathcal{Z}^k$ be a clustering, similar to $\mathcal{Z}$, in which the cluster $Z$ is replaced by the cluster $Z^k=Z \cup \{k\}$. In other words, the clustering $\mathcal{Z}^k$ can be written as $\mathcal{Z}^k =(\mathcal{Z} \setminus Z) \cup \{Z \cup \{k\}\}$. In order to determine if all the clusters are needed to achieve the optimal solution, the paper compute the different delays of $\mathcal{Z}$ and $\mathcal{Z}^k$. The delay for a set of transmitting users $\mathcal{Z}$ can be expressed as the delay induced by $Z$ and the delays induced by $Z^\prime \neq Z$ as illustrated in \eref{reformulation}. Therefore, the difference in delay between $\mathcal{Z}$ and $\mathcal{Z}^k$ is the following:
\begin{align}
\Delta &= |\mathcal{C}^T(Z) \cap M_w| - |Z \cap M_w| - |\mathcal{T}(Z) \cap M_w| \nonumber \\
& \qquad + y(Z) - \bigg(|\mathcal{C}^T(Z^k) \cap M_w| - |Z^k \cap M_w| \nonumber \\
& \qquad - |\mathcal{T}(Z^k) \cap M_w| + y(Z^k) \bigg) \label{eq10} \\
&= -|(\mathcal{C}^T(Z^k) \setminus \mathcal{C}^T(Z)) \cap M_w| + |\{k\} \cap M_w| \nonumber \\
& \qquad + |(\mathcal{T}(Z^k) \setminus \mathcal{T}(Z)) \cap M_w | + y(Z) - y(Z^k) \nonumber.
\end{align}
where
\begin{align}
y(Z) = \sum_{i \in Z} \left[ \max_{\kappa_i \in \mathcal{G}_i} \left(\sum_{j \in \tau_i(\kappa_i)} (1 - p_{ij} ) \right)  \right]
\label{eq9}
\end{align}

Clearly, if the quantity $\Delta$ defined in \eref{eq10} is a positive number, then including device $k$ in the clique do not provide any gain as it brings more interference than it serves devices. This concludes that such clustering $\mathcal{Z}^k$ is not the maximum weight clique since $\mathcal{Z}$ is a clique with a higher weight. Based on this observation, the rest of this subsection provide an efficient method for constructing the cooperation graph while avoiding, at maximum, generating the clusters that are surely not part of the maximum weight clique.

To construct the cooperation graph, first generate a vertex $v^0_j$ for each device $j \in \mathcal{M}$ in the network. Let such set of vertices be called the first layer of the graph. Two vertices $v_i^0$ and $v_j^0$ respecting the non-interference condition $\mathcal{C}^T(v^0_j) \cap \mathcal{C}^T(v^0_i) = \varnothing$ are connected. Afterwards, the weight of each vertex $v$, representing the cluster $Z$, is computed as follows:
\begin{align}
w(v) = |\mathcal{C}^T \cap M_w| - |Z \cap M_w| - |\mathcal{T} \cap M_w| + y(Z),
\end{align}
where the function $y(.)$ is defined in \eref{eq9}.

After this first step, the second layer of the graph is constructed by merging the clusters of the previous layer with the cluster of the first layer if the equality in \eref{eq10} holds. In other words, for each pair of clusters $v^0_j$ and $v^0_i$ that are not connected (otherwise the clustering violates \eref{eq6}), two scenarios can be distinguished:
\begin{itemize}
\item $w(\{v^0_i,v^0_j\}) \geq \max (v^0_i,v^0_j)$: generate cluster $v^1_{ij}$ representing the cluster $Z=\{i,j\}$.
\item $w(\{v^0_i,v^0_j\}) < \max (v^0_i,v^0_j)$: the cooperation is not beneficial. The cluster is not created.
\end{itemize}

After the cluster generating phase, the newly created vertices are connected with the other ones that satisfy the non-interference constraint. The process is repeated for all layers of the graph. Hence for layer $k+1$, combine all vertices $v^0_i$ with vertices $v^k_j$ if they are not connected and compute the weight of $w(\{v^0_i,v^k_j\})$ to decide to generate or not the cluster. The steps of the algorithm are summarized in \algref{algo2}.

\begin{algorithm}[t]
\begin{algorithmic}
\REQUIRE $\mathcal{C}_{i}, \forall\ i \in \mathcal{M}$.
\STATE Initialize $k = 0$.
\STATE Construct $\mathcal{G}^0(\mathcal{V},\mathcal{E})$ using section
\STATE Initialize $t = \TRUE$.
\WHILE{$t = \TRUE$}
\STATE Set $t \leftarrow \FALSE$.
\STATE Initialize $\mathcal{G}^{k+1} = \varnothing$
\FORALL{ $v^{k} \in \mathcal{G}^k$}
\FORALL{ $z^0 \in \mathcal{G}^0$}
\IF {$w(\{v^{k},z^0\}) \geq \max (w(v^{k}),w(z^0)) $}
\STATE Set $t \leftarrow \TRUE$.
\STATE Create vertex $y^{k+1} = \{v^k,z^0\}$
\STATE Set $\mathcal{G}^{k+1} \leftarrow \mathcal{G}^{k+1} \cup y$
\ENDIF
\ENDFOR
\ENDFOR
\STATE Set $k \leftarrow k + 1$.
\FORALL{ $y^{k} \in \mathcal{G}^{k}$}
\FORALL{ $x \in \bigcup_{i=0}^k\mathcal{G}^k$}
\IF {$\mathcal{C}^T(x) \cap \mathcal{C}^T(y^{k}) = \varnothing$ }
\STATE Create edge between $y^{k}$ and $x$
\ENDIF
\ENDFOR
\ENDFOR
\ENDWHILE
\STATE Set $\mathcal{G}=\bigcup_{i=0}^k \mathcal{G}^k$
\end{algorithmic}
\caption{Graph generation}
\label{algo2}
\end{algorithm}

\begin{theorem}
Solving the maximum weight clique in the cooperation graph generated by \algref{algo2} yields the optimal solution to the optimization problem \eref{original}. In other words, the maximum weight clique in the graph produced by \algref{algo2} is the same as the maximum weight clique in the cooperation graph constructed in \sref{cooper}.
\label{th5}
\end{theorem}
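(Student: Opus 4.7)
My plan is to prove the theorem by showing the maximum weight cliques in the two graphs coincide. One direction is immediate: Algorithm~\ref{algo2}'s graph is a subgraph of the full cooperation graph of \sref{cooper}. I would first verify by induction on the layer index that every cluster generated by Algorithm~\ref{algo2} lies in $\mathbf{Z}$, i.e.\ satisfies \eref{eq6}: singletons trivially do so, and the merge step requires the added singleton $k$ to interfere with the existing cluster $Z$ (otherwise $k$ and $Z$ are already joined by a non-interference edge in the layer-$0$ subgraph and are never merged), which preserves \eref{eq6} for $Z\cup\{k\}$. Edges are drawn by the same coverage-disjointness rule as in \sref{cooper}, and vertex weights are computed exactly as in \thref{th4}. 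Consequently every clique of Algorithm~\ref{algo2}'s graph is a clique of the full cooperation graph with the same weight, so the maximum weight clique of Algorithm~\ref{algo2}'s graph weighs at most that of the full graph.

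For the reverse inequality, I would take any maximum weight clique $\mathcal{Z}^{*}=\{Z_{1},\dots,Z_{p}\}$ of the full cooperation graph and show, by strong induction on $|Z_{i}|$, that each $Z_{i}$ is generated by Algorithm~\ref{algo2}. The singleton case is immediate. For $|Z_{i}|\ge 2$, condition \eref{eq6} is equivalent to connectedness of the overlap graph on $Z_{i}$ (with edges joining elements whose coverage zones intersect); any connected graph on at least two vertices admits a non-cut vertex $k$, so $Z_{i}\setminus\{k\}$ remains a valid cluster. The crucial step is to check Algorithm~\ref{algo2}'s merge condition $w(Z_{i})\ge\max(w(Z_{i}\setminus\{k\}),w(\{k\}))$. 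Replacing $Z_{i}$ in $\mathcal{Z}^{*}$ by $Z_{i}\setminus\{k\}$, or alternatively by $\{k\}$, still yields a clique of the full cooperation graph since $\mathcal{C}^{T}(Z_{i}\setminus\{k\})\subseteq\mathcal{C}^{T}(Z_{i})$ and $\mathcal{C}_{k}\subseteq\mathcal{C}^{T}(Z_{i})$ preserve non-interference with every other $Z_{j}$; maximality of $\mathcal{Z}^{*}$ therefore forces $w(Z_{i}\setminus\{k\})\le w(Z_{i})$ and $w(\{k\})\le w(Z_{i})$, which is exactly the merge condition.

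The main obstacle will be closing the induction, because $Z_{i}\setminus\{k\}$ is not itself in $\mathcal{Z}^{*}$ and cannot benefit from the inductive hypothesis verbatim. I would resolve this by strengthening the claim to the statement: \emph{every valid cluster $Z$ with $w(Z)\ge w(Z')$ for every valid sub-cluster $Z'\subseteq Z$ is generated by Algorithm~\ref{algo2}}. Each $Z_{i}\in\mathcal{Z}^{*}$ inherits this strengthened property by iterating the replacement argument of paragraph two to \emph{every} valid sub-cluster of $Z_{i}$. If $Z_{i}\setminus\{k\}$ also enjoys this locally-maximal property, the induction closes directly; otherwise, there exists a valid $Z''\subsetneq Z_{i}\setminus\{k\}$ with $w(Z'')>w(Z_{i}\setminus\{k\})$, to which I would apply the induction (it is strictly smaller) to obtain an Algorithm-$2$-generated cluster of at least weight $w(Z'')$ inside $\mathcal{C}^{T}(Z_{i})$. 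Substituting it for $Z_{i}$ in $\mathcal{Z}^{*}$ then produces a clique in Algorithm~\ref{algo2}'s graph of weight at least $w(\mathcal{Z}^{*})$. Combined with the inequality of the first paragraph, this equates the two maximum weight cliques, and \thref{th4} identifies the resulting clique with the optimal solution of \eref{original}.
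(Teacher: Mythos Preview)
Your overall strategy matches the paper's: observe that \algref{algo2} produces a subgraph of the full cooperation graph of \sref{cooper}, then take a maximum weight clique $\mathcal{Z}^{*}$ in the full graph and argue that each of its clusters is actually generated by \algref{algo2}. The paper, however, sets up the induction differently and thereby sidesteps the obstacle you flag. Rather than doing strong induction on $|Z_i|$ and peeling off a non-cut vertex, the paper fixes, for each $Z_k^{*}\in\mathcal{Z}^{*}$, an ordering $z_1^k,\dots,z_{|Z_k^{*}|}^k$ such that every prefix $v_i=\{z_1^k,\dots,z_i^k\}$ satisfies \eref{eq6}, and then proves by induction on $i$ that each prefix $v_i$ is generated. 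Since $v_{i-1}$ is the immediate predecessor layer-wise, the inductive hypothesis applies directly; the merge condition $w(v_i)\ge\max(w(v_{i-1}),w(\{z_i^k\}))$ is then argued by replacing $Z_k^{*}$ in $\mathcal{Z}^{*}$ by $Z_k^{*}\setminus\{z_i^k\}$ (respectively $Z_k^{*}\setminus v_{i-1}$) and invoking maximality of $\mathcal{Z}^{*}$.

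Your attempted fix in the last paragraph does not close the induction. You produce an \algref{algo2}-generated cluster $\tilde Z\subseteq Z_i$ with $w(\tilde Z)\ge w(Z'')>w(Z_i\setminus\{k\})$, but since $Z_i$ is locally maximal you also have $w(\tilde Z)\le w(Z_i)$. Substituting $\tilde Z$ for $Z_i$ in $\mathcal{Z}^{*}$ therefore yields a clique of weight $w(\mathcal{Z}^{*})-w(Z_i)+w(\tilde Z)\le w(\mathcal{Z}^{*})$, which is the wrong inequality (and the remaining $Z_j$, $j\neq i$, have not yet been shown to lie in \algref{algo2}'s graph either, so the substituted family is not yet known to be a clique there). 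The cleanest repair is exactly the paper's: drop the strengthened ``locally maximal'' claim, order each $Z_i\in\mathcal{Z}^{*}$ so that all prefixes are valid clusters, and run the induction along that chain of prefixes so that the predecessor needed for the merge test is precisely the cluster already produced at the previous step.
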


\begin{proof}
The proof can be found in \appref{ap6}.
\end{proof}

\section{Simulations Results}\label{sec:sim}

In this section, the simulation results comparing the decoding delay performance of the proposed schemes in the partially connected D2D system (problem \eref{original} and the heuristic interference-less solution \eref{reduced}) to both the conventional PMP system \cite{5683677} (base station is responsible for both the \emph{initial} phase and the \emph{recovery} phase) and the solution proposed in \cite{6620795} for fully connected network.

In these simulations, the decoding delay is computed over a large number of realizations, and the mean value is presented. Since the short range communications are more reliable than the base-station-to-device communications \cite{6620795,6404743}, the D2D packet erasure probability $P$ is set to be $P = Q/2$ where $Q$ is the base-station-to-device packet erasure probability. These erasure probabilities are assumed to be perfectly known for all devices.

\begin{figure}[t]
\centering
% Requires \usepackage{graphicx}
\includegraphics[width=1\linewidth]{./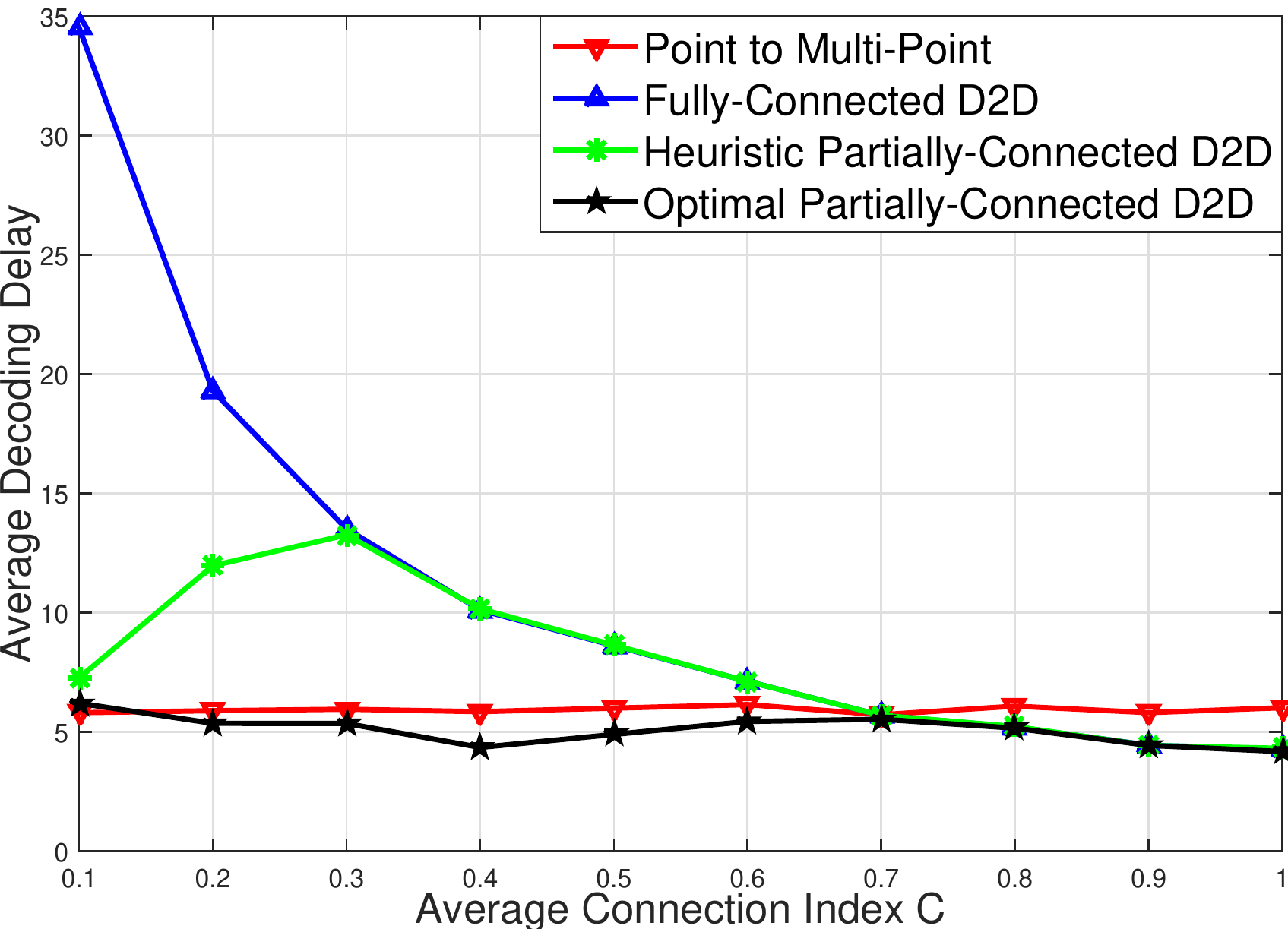}\\
\caption{Mean decoding delay versus the connectivity index $C$ for a network composed of $M=60$ devices, $N=30$ packets, an erasure probability $P=0.1$, and $Q=0.2$.}\label{fig:C}
\end{figure}

\begin{figure}[t]
\centering
% Requires \usepackage{graphicx}
\includegraphics[width=1\linewidth]{./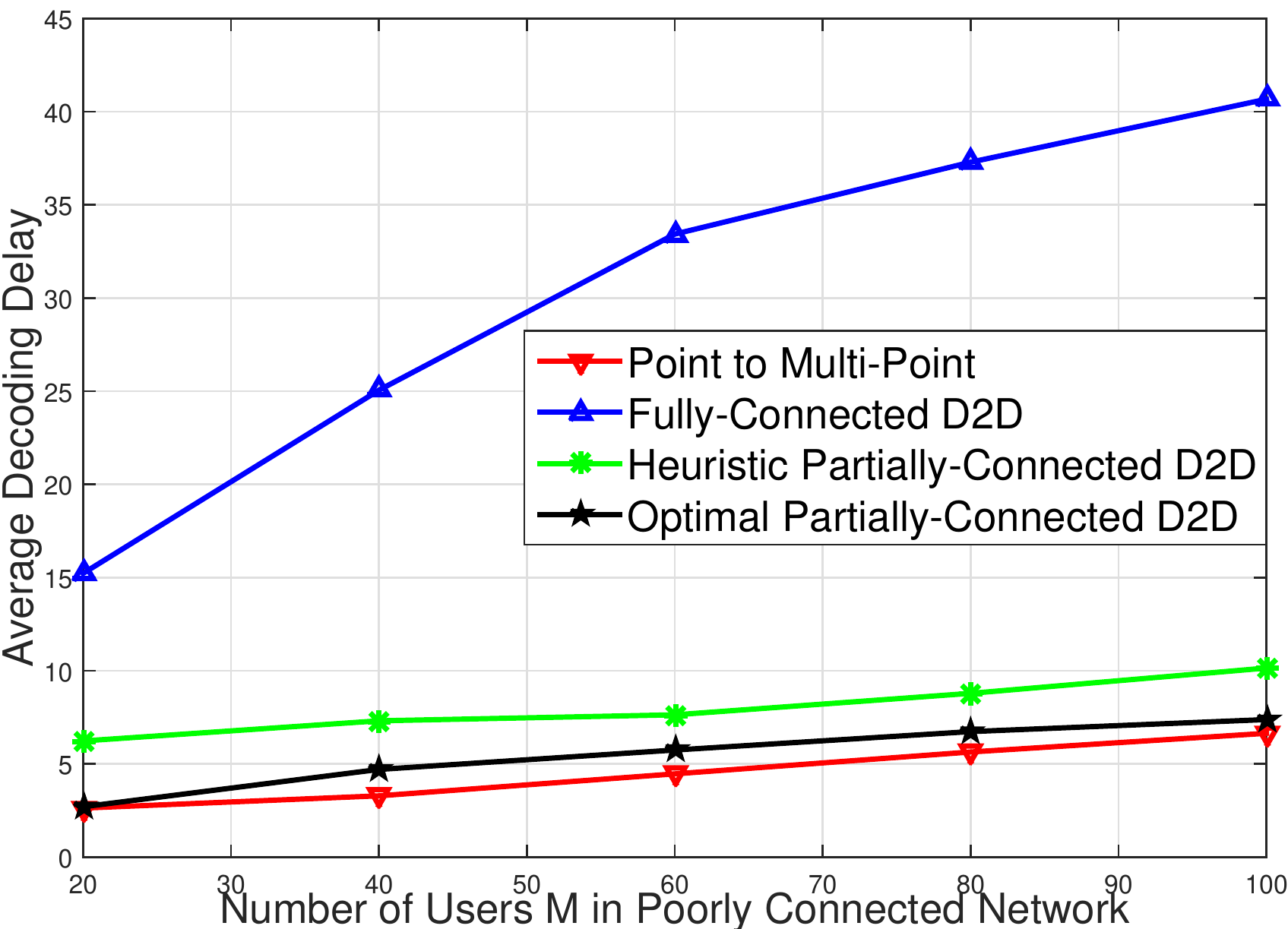}\\
\caption{Mean decoding delay versus number of devices $M$ for a network composed of $N=30$ packets, a connecitvity index $C=0.1$ an erasure probability $P=0.1$, and $Q=0.2$}\label{fig:M1}
\end{figure}

\begin{figure}[t]
\centering
% Requires \usepackage{graphicx}
\includegraphics[width=1\linewidth]{./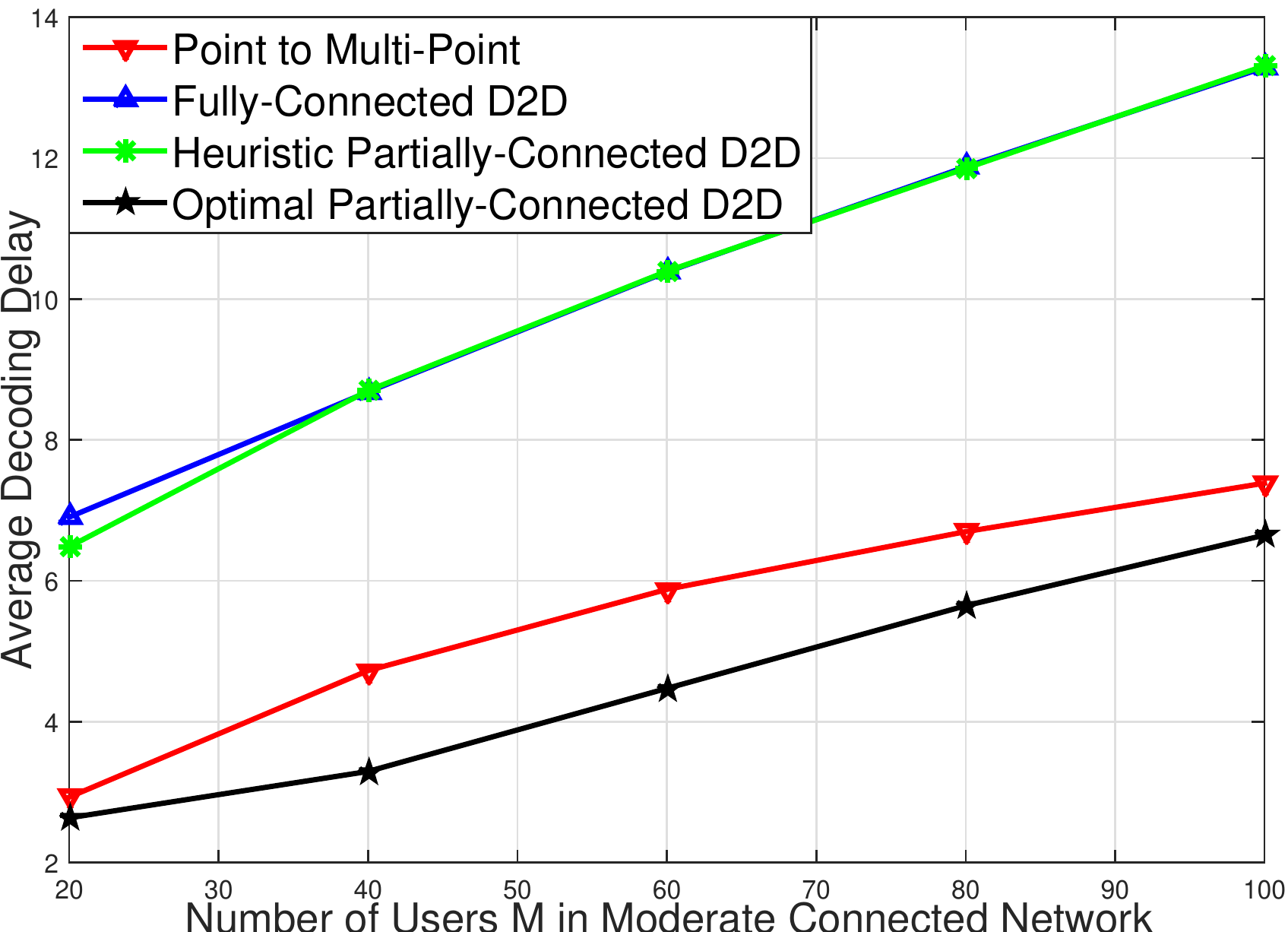}\\
\caption{Mean decoding delay versus number of devices $M$ for a network composed of $N=30$ packets, a connecitvity index $C=0.4$ an erasure probability $P=0.1$, and $Q=0.2$.}\label{fig:M4}
\end{figure}

\begin{figure}[t]
\centering
% Requires \usepackage{graphicx}
\includegraphics[width=1\linewidth]{./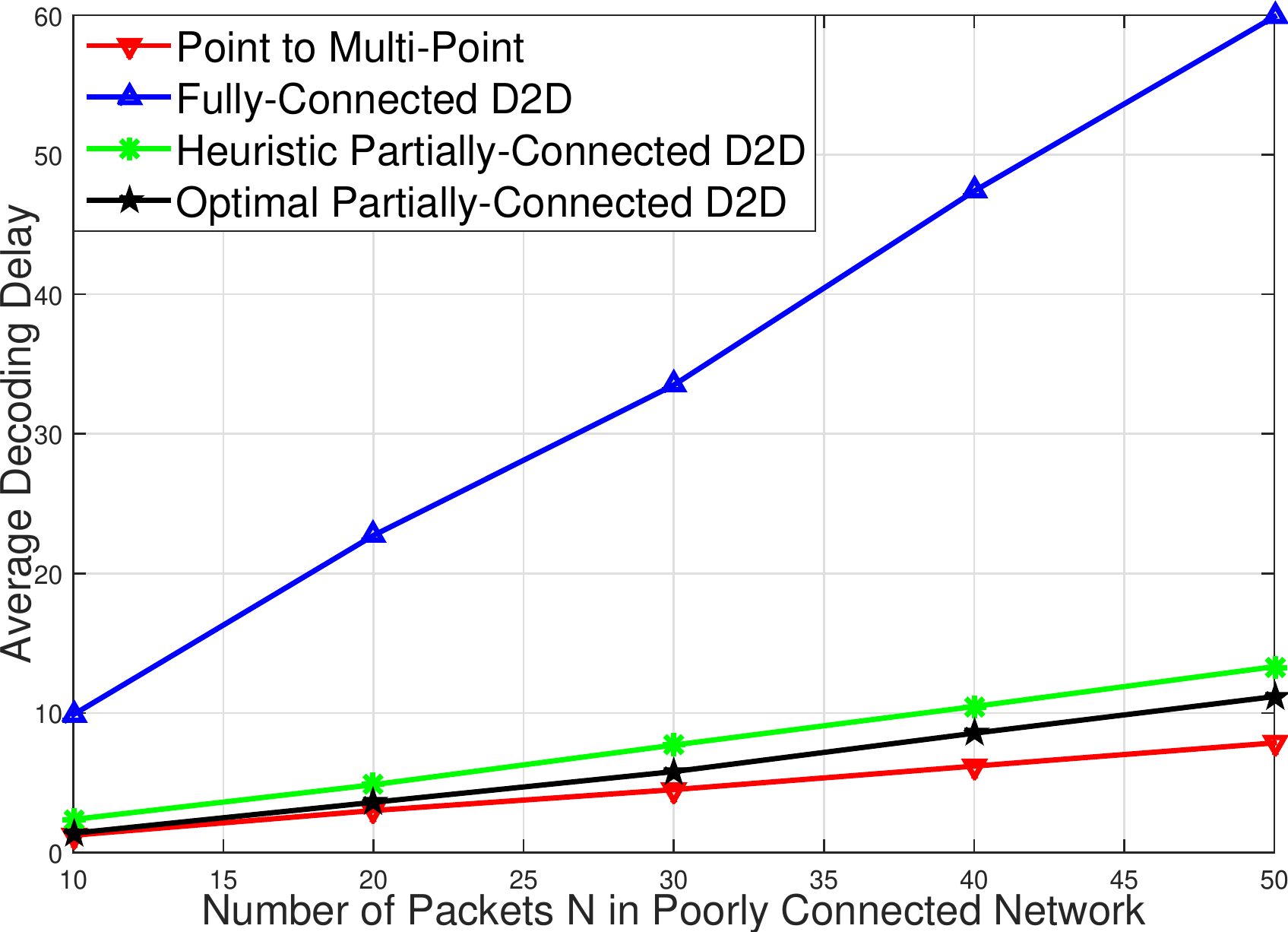}\\
\caption{Mean decoding delay versus number of packets $N$ for a network composed of $M=60$ devices, a connecitvity index $C=0.1$ an erasure probability $P=0.1$, and $Q=0.2$.}\label{fig:N1}
\end{figure}

\begin{figure}[t]
\centering
% Requires \usepackage{graphicx}
\includegraphics[width=1\linewidth]{./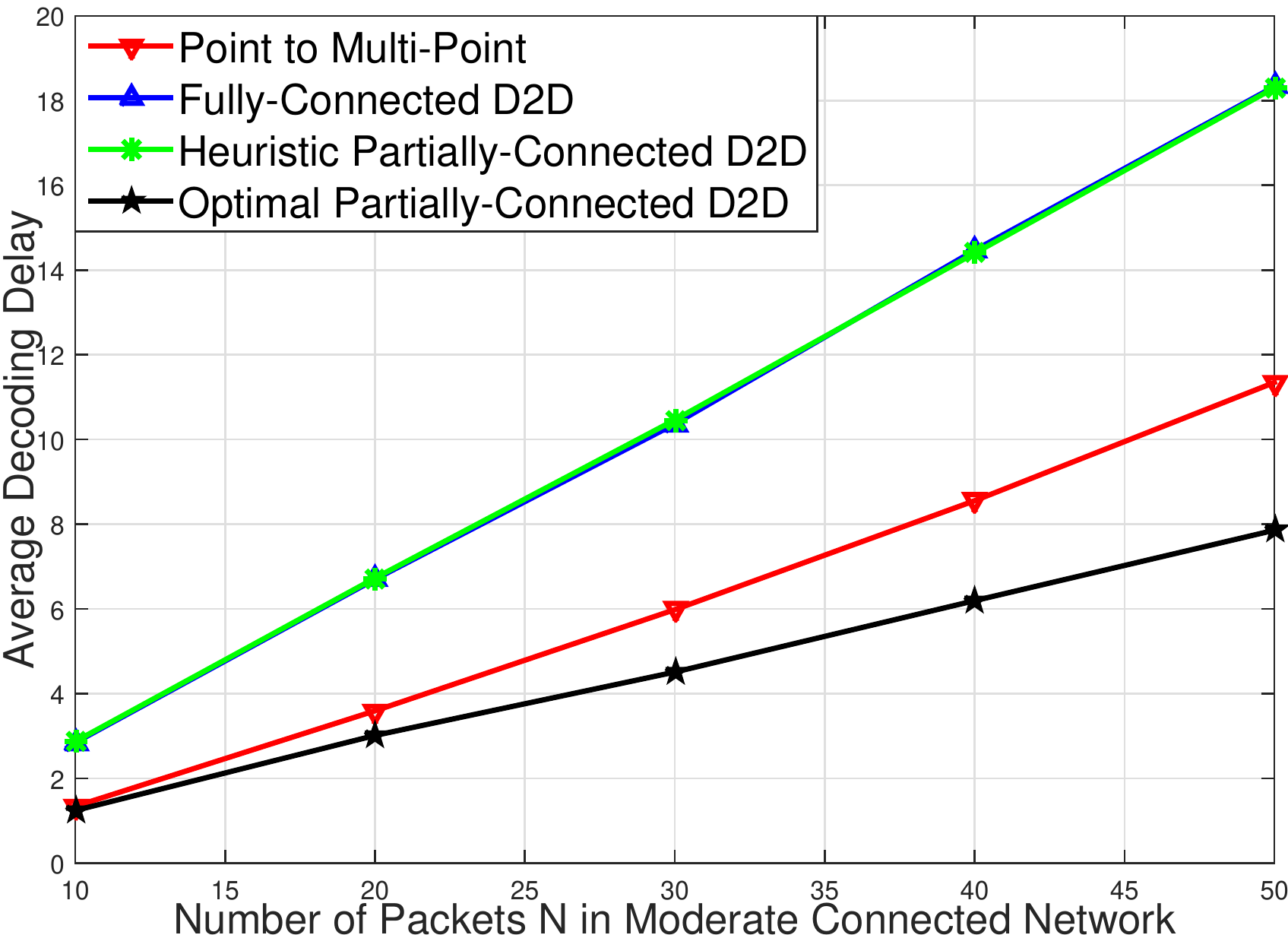}\\
\caption{Mean decoding delay versus number of packets $N$ for a network composed of $M=60$ devices, a connecitvity index $C=0.4$ an erasure probability $P=0.1$, and $Q=0.2$.}\label{fig:N4}
\end{figure}

\begin{figure}[t]
\centering
% Requires \usepackage{graphicx}
\includegraphics[width=1\linewidth]{./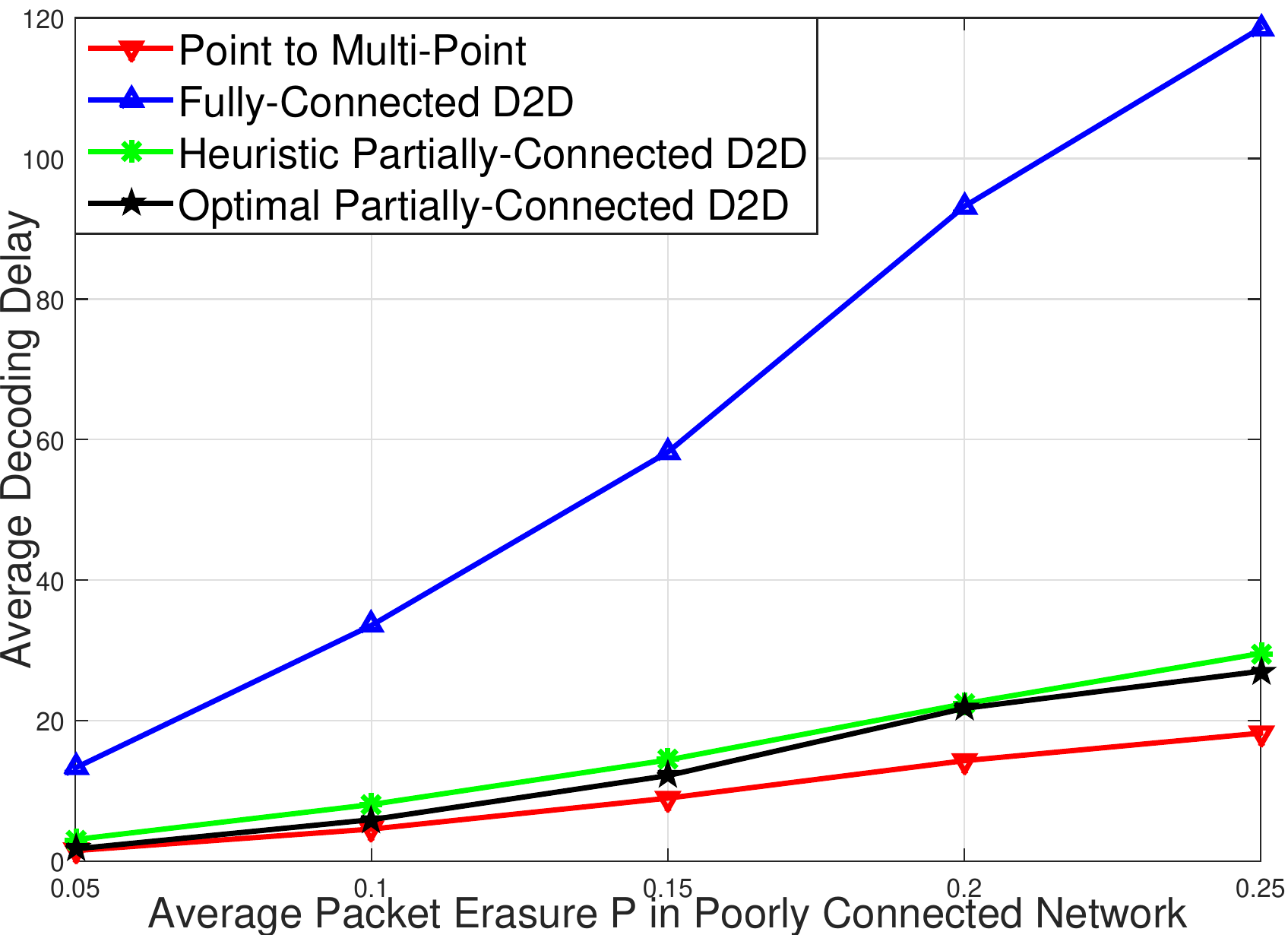}\\
\caption{Mean decoding delay versus erasure probability $P$ for a network composed of $M=60$ devices, $N=30$ packets, a connecitvity index $C=0.1$ an erasure probability $Q=2P$.}\label{fig:P1}
\end{figure}

\begin{figure}[t]
\centering
% Requires \usepackage{graphicx}
\includegraphics[width=1\linewidth]{./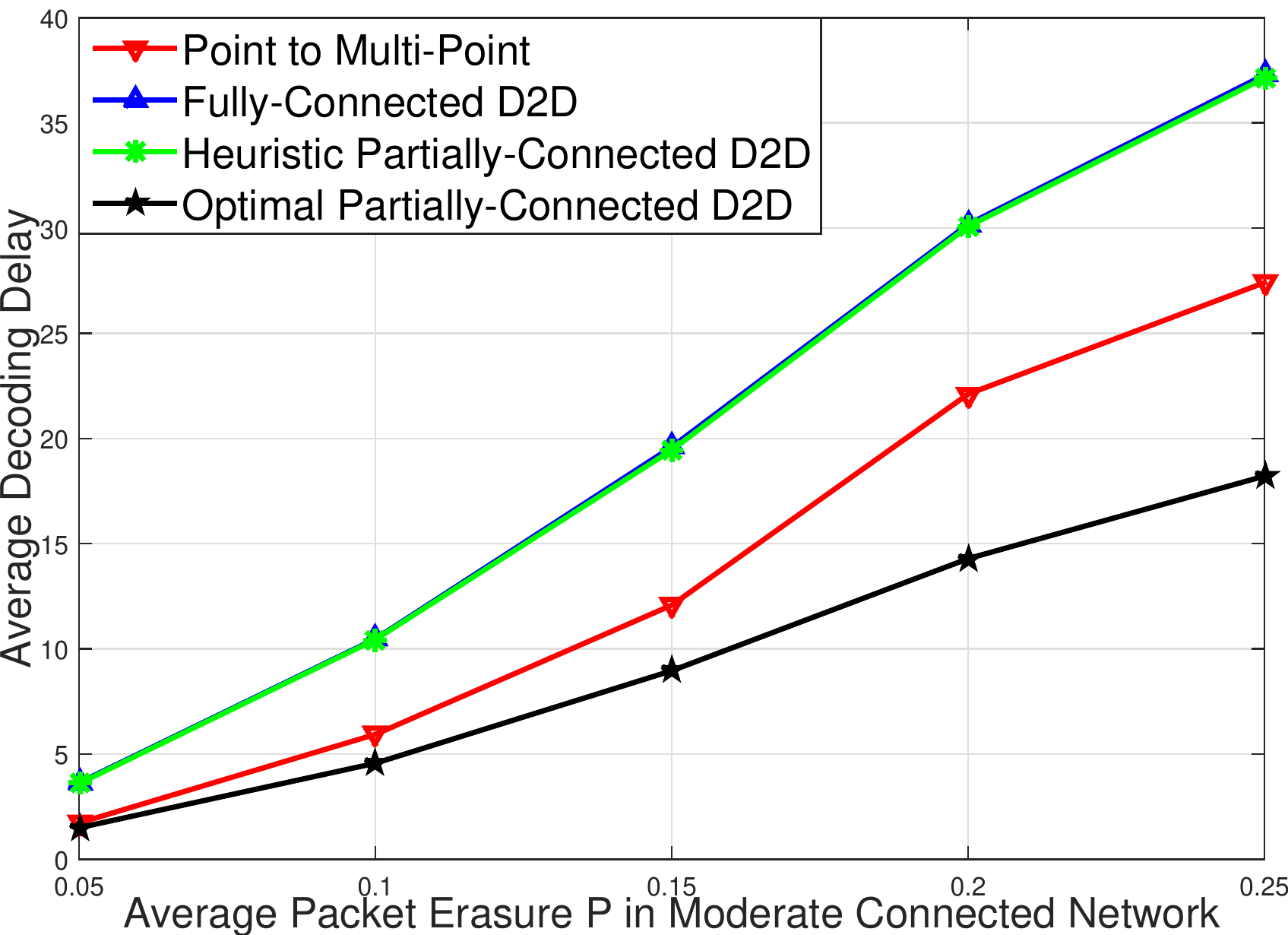}\\
\caption{Mean decoding delay versus erasure probability $P$ for a network composed of $M=60$ devices, $N=30$ packets, a connecitvity index $C=0.4$ an erasure probability $Q=2P$.}\label{fig:P4}
\end{figure}

\fref{fig:C} depicts the comparison of the average decoding delay achieved by the PMP configuration, the fully connected D2D policy, and our partially connected D2D optimal nad heuristic policies against the connectivity index $C$ for $M=60$, $N=30$, $P=0.1$ and $Q=0.2$. From \fref{fig:C}, we clearly see that for a low connectivity index ($C \leq 0.3$) our partially connected approach largely outperforms the fully connected approach. This can be explained by the fact that in the fully connected D2D, only one device is transmitting at a time. For a highly connected graph, a single device can targeted all the others since the number of devices out of the transmission range is negligible. Whereas for a low connectivity index, this term is no longer negligible. Further, the heuristic solution clearly degrades as the connectivity index increase. This can be explained by the fact that as the connectivity index increases, the number of devices that can transmit simultaneously decreases and thus the solution gets closer and closer to the fully connected D2D.

\fref{fig:M1} and \fref{fig:M4} illustrate the average decoding delay achieved by the same policies against the number of devices $M$ for $N=30$, $P=0.1$, $Q=0.2$ for a poorly connected network ($C=0.1$ in \fref{fig:M1}) and a moderately connected network ($C=0.4$ in \fref{fig:M4}). \fref{fig:N1} and \fref{fig:N4} show a similar comparison against the number of packets $N$ for $M=60$, $P=0.1$, $Q=0.2$ and $C=0.8$ for a poorly connected network ($C=0.1$ in \fref{fig:N1}) and a moderately connected network ($C=0.4$ in \fref{fig:N4}).

For connectivity index $C = 0.1$, the heuristic partially connected and the optimal partially connected policies have the same performance as displayed in \fref{fig:M1} and \fref{fig:N1}. This can be explained by the fact that at low connectivity, the set of devices that can transmit simultaneously while respecting the non-interference constraint is large and thus allowing interference do not provide a significant gain. Further, the partially connected algorithms provide an appreciable gain as compared with the fully connected solution as the number of devices and packets increases. For $C=0.4$, we can clearly from \fref{fig:M4} and \fref{fig:N4} see that the optimal partially connected policy outperforms the PMP approach. Note that the PMP approach does not suffer from delay due to devices out of the transmitting range and delay encountered by the transmitting devices. As the connectivity index increases, these additional decoding delay decreases, which explain that the optimal D2D approach outperforms the PMP at these connectivity indexes.

\fref{fig:P1} and \fref{fig:P4} depict the same comparison for the decoding delay against the packet erasure probability $P$ for $M=60$, $N=30$, $Q=2P$, $C=0.1$ for \fref{fig:P1}, and $C=0.4$ for \fref{fig:P4}. For a low connectivity index $C$ our proposed solutions outperform the fully connected one for all values of erasure probabilities as shown in \fref{fig:P1}. As the connectivity index increases (\fref{fig:P4}), we clearly see that the performance of the optimal partially connected D2D is better than the PMP one. This can be explained by the fact that in our simulation, we fixed the base station-to-device erasure probability to be twice as large as the device-to-device erasure probability. For a small erasure probability, the different between the D2D and the PMP erasure is not significant. As $P$ increases, $Q$ increases twice as much, which explain that the D2D approaches the PMP performance.

\section{Conclusion}\label{sec:conc}

The paper investigates the decoding delay reduction problem in IDNC-based device-to-device communications network. In a D2D configuration, devices exchange encoded packets to help hasten the recovery of the lost packets of other devices in their transmission range. This paper considers a partially connected network in which more than one device can simultaneously transmit. The decoding delay minimization problem is formulated as a joint optimization over the set of transmitting devices and the packet combination to be transmitted. The paper first present a heuristic solution in which only non-interfering devices are allowed to transmit simultaneously. Afterwards, the paper proposes the global optimal solution to the decoding delay using a clustering approach. Both problems are shown to be equivalent to a maximum weight clique problem in the constructed cooperation graph. Through extensive simulations, the decoding delay experienced by all devices in the Point to Multi-Point (PMP) configuration, the fully connected D2D (FC-D2D) configuration and the more practical partially connected D2D (PC-D2D) configuration are compared. Numerical results suggest that the PC-D2D outperforms the FC-D2D and provides appreciable gain especially for poorly connected networks.

\bibliographystyle{IEEEtran}
\bibliography{references}

\appendices

\numberwithin{equation}{section}

\section{Proof of \thref{th1}} \label{ap7}

In order to formulate the decoding delay minimization problem in partially connected IDNC-based D2D network, the expected decoding delay increase of each device is first expressed. Afterward, these expressions are used to formulate the problem as a joint optimization over the set of transmitting devices and the packet combinations to be transmitted.

From the definition of the decoding delay, device $i$, with non-empty Wants sets, experiences one unit increase of decoding delay if he cannot hear exactly one transmission or if he hears a packet that is either non-instantly decodable or non-innovative. The event of a device $i$, with non-empty Wants set, not be able to hear exactly one transmission occurs in the following scenarios:
\begin{itemize}
\item The device cannot hear any transmission: This event occurs if one of the following is true:
\begin{itemize}
\item The device is transmitting (i.e., $i \in \mathcal{A} \cap M_w$).
\item The device is out of the transmission range of the transmitting devices (i.e., $i \in \mathcal{S} \cap M_w $).
\end{itemize}
\item The device can hear multiple transmissions: This event occurs when the device is in the interference region of the transmitting devices (i.e., $i \in \mathcal{T} \cap M_w$).
\end{itemize}

The event of a device $i$, with non-empty Wants set, hearing a packet that is either non-instantly decodable or non-innovative happens when the device is in the opportunity zone of one of the transmitting devices (i.e., $i \in \mathcal{O}_j, j \in \mathcal{A}$) and one of the following event is true:
\begin{itemize}
\item He receives a non-innovative packet combination $\kappa_{j}$.
\item He receives a non-decodable packet combination $\kappa_{j}$.
\end{itemize}
These last two events translate the fact that device $i$ is not targeted by the transmission (i.e., $i \in (\mathcal{O}_{j} \setminus \tau_{j}(\kappa_{j}))$. Due to the erasure nature of the links, the probability that the event occurs is $1-p{ji}$.

In order to derive the expected decoding delay increase for an arbitrary device $i$ in the network, we first partition the set of devices. From \lref{l2} (\appref{app10}), for any set of transmitting devices $\mathcal{A}$, the following sets forms a partition of the set of devices $\mathcal{M}$:
\begin{align}
&\mathcal{M} = \bigoplus_{i \in \mathcal{A}} \left( \mathcal{O}_{i} \cap M_w \right) \oplus \left( \mathcal{A} \cap M_w \right) \nonumber \\
& \qquad \qquad \oplus \left( \mathcal{T} \cap M_w \right) \oplus \left( \mathcal{S} \cap M_w \right) \oplus \overline{M}_w.
\label{eqth1}
\end{align}

Given the partition of devices in \eref{eqth1} and the analysis above, the expected individual decoding delay $d_j(\mathcal{A},\kappa)$ of device $j$ when devices in $\mathcal{A}$ are transmitting respectively the packets combinations $\kappa_{i}$ and $\kappa = \{\kappa_{i}\}_{i \in \mathcal{A}}$ is given by:
\begin{align}
&\mathds{P}(d_j(\mathcal{A},\kappa)=1) =
\begin{cases}
0 &\text{ if } j \in \overline{M}_w \\
1 &\text{ if } j \in \mathcal{A} \cap M_w \\
1 &\text{ if } j \in \mathcal{T} \cap M_w \\
1 &\text{ if } j \in \mathcal{S} \cap M_w \\
0 &\text{ if } j \in (\mathcal{O}_{i} \cap M_w) \cap \tau_{i}(\kappa_{i}) \\
1-p_{ij} &\text{ if } j \in (\mathcal{O}_{i} \cap M_w) \setminus \tau_{i}(\kappa_{i}).
\end{cases} \nonumber
\end{align}

Let $D(\mathcal{A},\kappa)$ be the total decoding delay increase. From \eref{eqth1}, all the sets form a partition of $\mathcal{M}$ and thus are disjoint. Therefore, the expected overall decoding delay can be written as:
\begin{align}
\mathds{E}\left[D(\mathcal{A},\kappa)\right] &= \sum_{i \in \mathcal{M}} d_i(\mathcal{A},\kappa) \nonumber \\
&= |\mathcal{A} \cap M_w| + |\mathcal{T} \cap M_w| + |\mathcal{S} \cap M_w| \nonumber \\
& \qquad + \sum_{i \in \mathcal{A}} \left( \sum_{j \in (\mathcal{O}_{i} \cap M_w) \setminus \tau_{i}(\kappa_{i}) } 1-p_{ij} \right) .
\label{eqth2}
\end{align}

Using the expected decoding delay increase expression \eref{eqth2}, the decoding delay minimization problem can be formulated as the following joint optimization over the set of transmitting devices and the packet combination to be transmitted:
\begin{align}
&\underset{\kappa_{i}(\mathcal{A}) \in \mathcal{P}(\mathcal{H}_{i})}{\min_{ \mathcal{A} \in \mathcal{P}(\mathcal{M}) }} \mathds{E}\left[D(\mathcal{A},\kappa)\right] \nonumber \\
&= \underset{\kappa_{i} \in \mathcal{P}(\mathcal{H}_{i})}{\min_{ \mathcal{A} \in \mathcal{P}(\mathcal{M}) }} |\mathcal{A} \cap M_w| + |\mathcal{T} \cap M_w| \nonumber \\
& \qquad + |\mathcal{S} \cap M_w| + \sum_{i \in \mathcal{A}} \left( \sum_{j \in (\mathcal{O}_{i} \cap M_w) \setminus \tau_{i}(\kappa_{i}) } 1-p_{ij} \right) \nonumber \\
&= \underset{\kappa_{i} \in \mathcal{P}(\mathcal{H}_{i})}{\max_{ \mathcal{A} \in \mathcal{P}(\mathcal{M}) }} -|\mathcal{A} \cap M_w| - |\mathcal{T} \cap M_w| \nonumber \\
& \qquad - |\mathcal{S} \cap M_w| + \sum_{i \in \mathcal{A}} \left( \sum_{j \in \mathcal{O}_{i} \cap \tau_{i}(\kappa_{i}) } 1-p_{ij} \right) \nonumber \\
&= \underset{\kappa_{i} \in \mathcal{P}(\mathcal{H}_{i})}{\max_{ \mathcal{A} \in \mathcal{P}(\mathcal{M}) }} -|\mathcal{A} \cap M_w| - |\mathcal{T} \cap M_w| \nonumber \\
& \qquad - |\mathcal{S} \cap M_w| + \sum_{i \in \mathcal{A}} \left( \sum_{j \in \tau_{i}(\kappa_{i}) } 1-p_{ij} \right).
\label{eqth3}
\end{align}

Apparently, from the problem formulation \eref{eqth3}, the packet combination that the transmitting devices can generate only affects the last term of the expression. Therefore, the decoding delay minimization problem can be reformulated as follows:
\begin{align}
&\max_{ \mathcal{A} \in \mathcal{P}(\mathcal{M}) } -|\mathcal{A} \cap M_w| - |\mathcal{T} \cap M_w| - |\mathcal{S} \cap M_w| \nonumber \\
& \qquad + \sum_{i \in \mathcal{A}} \left( \sum_{j \in \tau_{i}(\kappa_{i}^*) } 1-p_{ij} \right) \\
&\text{subject to } \nonumber\\
&\kappa_{i}^*(\mathcal{A}) = \arg \max_{\kappa_{i} \in \mathcal{P}(\mathcal{H}_{i}) } \left( \sum_{j \in \tau_{i}(\kappa_{i}) } 1-p_{ij} \right), \ \forall \ i \in \mathcal{A}.
\end{align}

\section{Proof of \lref{l4}} \label{ap9}

To proof this lemma, an approach similar to the one employed in \cite{6570827} for a PMP network is used. Since a device $i$ can target only the devices in his opportunity zone $\mathcal{O}_i$ and can make packet combination only using the packets it already holds, then the network can be reduced. In other words, considering the reduced network consisting of a set $\mathcal{M}^\prime = \mathcal{O}_i$ of devices and a set $\mathcal{N}^\prime = \mathcal{H}_i$ of packets yields the same solution. In fact, since $\tau_{i}(\kappa_{i}) \subseteq \mathcal{O}_i$, then the targeted devices in the original network are the same as in the reduced one. Moreover, given $\kappa_{i} \subseteq \mathcal{H}_i$ then the packet combination remains unchanged. Therefore, the problem can be expressed as:
\begin{align}
\kappa_{i}^* &= \arg \max_{\kappa_{i} \in \mathcal{P}(\mathcal{N}^\prime) } \left( \sum_{j \in \tau_{i}(\kappa_{i}) } 1-p_{ij} \right).
\label{eq:15}
\end{align}

According to the analysis done in \cite{6570827}, the optimization problem \eref{eq:15} is equivalent to a maximum weight clique in the IDNC graph \cite{5683677} of the reduced network. Since the IDNC graph of the reduced network is equivalent to the local IDNC graph introduced in \sref{sec:prob}, the result can be easily extended to the local IDNC graph. Therefore, the set of all feasible packet combinations in local IDNC is represented by all maximal cliques in $\mathcal{G}_i$. To generate a packet combination, binary XOR is applied to all the packets identified by the vertices of a selected maximal clique $\kappa$ in $\mathcal{G}_i$. The targeted devices by this transmission $\kappa$ are those determined by the vertices of the selected maximal clique. In terms of the local IDNC graph, the optimization problem \eref{eq11} is equivalent to:
\begin{align}
\kappa_{i}^* &= \arg \max_{\kappa_{i} \in \mathcal{P}(\mathcal{H}_{i}) } \left( \sum_{j \in \tau_{i}(\kappa_{i}) } 1-p_{ij} \right) \nonumber \\
&= \arg \max_{\kappa_i \in \mathcal{G}_{i} } \left( \sum_{j \in \tau_{i}(\kappa) } 1-p_{ij} \right), \ \forall \ i \in \mathcal{A}.
\end{align}
where $\kappa$ is a clique in the local IDNC graph. Therefore, the solution of the optimization problem \eref{eq11} for device $i \in \mathcal{A}$ is the maximum weight clique in the local IDNC graph $\mathcal{G}_i$ in which the weight of each vertex $v_{jl}$ is $1-p_{ij}$.

\section{Proof of \thref{th2}} \label{ap8}

To show the result, we first prove that the local IDNC graph is independent of the set of transmitting devices. This result, as shown in this proof, allows the decoupling of the variable of the problem \eref{original}, i.e., the decoupling of the problems \eref{eq12} and \eref{eq11}.

From \lref{l3} (\appref{app10}), under the non-interference condition $\mathcal{T} = \varnothing$, the opportunity zone of any transmitting device $i$ is independent of the set of transmitting devices. In other words, for any set of transmitting device $\mathcal{A} \in \mathcal{I}$:
\begin{align}
\mathcal{O}_i(\mathcal{A}) = \mathcal{O}_i = \mathcal{C}_i \setminus \{i\}.
\label{eqth21}
\end{align}

The only variable that depends on the set of transmitting devices $\mathcal{A}$, in the construction of the local IDNC graph of device $i$, is the opportunity zone $\mathcal{O}_i$. Therefore, given that \eref{eqth21} holds under the non-interference constraint, then the local IDNC graph does not depend on the transmitting devices (i.e., $\mathcal{G}_i(\mathcal{A}) = \mathcal{G}_i$). As a consequence, the optimal packet combination device $i$ can generate (optimization problem \eref{eq11}) do not depend on the set of transmitting devices as showed in following equation:
\begin{align}
\kappa_{i}^*(\mathcal{A}) &= \arg \max_{\kappa \in \mathcal{G}_{i}(\mathcal{A}) } \left( \sum_{j \in \tau_{i}(\kappa) } 1-p_{ij} \right) \nonumber \\
&= \arg \max_{\kappa \in \mathcal{G}_{i} } \left( \sum_{j \in \tau_{i}(\kappa) } 1-p_{ij} \right) = \kappa_{i}^* ,\ \forall \ i \in \mathcal{A}.
\label{eq1}
\end{align}

Let $y_{i}(\kappa_{i}^*)$ be the value of the objective function of \eref{eq1}. Note that $y_{i}(\kappa_{i}^*)$ is a non-negative function. The problem of selecting the transmitting devices can be expressed as:
\begin{align}
A^* &= \arg \max_{ \mathcal{A} \in \mathcal{I} } -|\mathcal{A} \cap M_w| - |\mathcal{S} \cap M_w| + \sum_{i \in \mathcal{A}} y_{i}(\kappa_{i}^*) . \nonumber
\end{align}

\section{Proof of \thref{th3}}\label{ap3}

To prove this theorem, we first reformulate the optimization problem \eref{eq3} in a more tractable form. Afterward, we show that there is a one to one mapping between the set of feasible transmitting devices and the cliques in the cooperation graph. Finally, since the weight of the clique is equivalent to the objective function of \eref{eq3} then the optimal solution of \eref{eq3} is the maximum weight clique in the cooperation graph.

Let $ \mathcal{C}^T(\mathcal{A}) = \bigcup\limits_{i \in \mathcal{A}} \mathcal{C}_{i}$ be the total coverage zone of the transmitting devices. The set of devices, with non-empty Wants set, out of the transmission range of the transmitting devices can be written, under the non-interference constraint, as follows:
\begin{align}
\mathcal{S} \cap M_w &= \left( \mathcal{M} \setminus \mathcal{C}^T(\mathcal{A}) \right) \cap M_w = \left( \mathcal{M} \cap \overline{\mathcal{C}^T}(\mathcal{A}) \right) \cap M_w \nonumber \\
&= \overline{\mathcal{C}^T}(\mathcal{A}) \cap M_w = \left( \overline{\mathcal{C}^T}(\mathcal{A}) \cap M_w \right) \cup \left( M_w \cap \overline{M}_w \right) \nonumber \\
&= M_w \cap \left( \overline{\mathcal{C}^T}(\mathcal{A}) \cup \overline{M}_w \right) = M_w \cap \left( \overline{\mathcal{C}^T(\mathcal{A}) \cap M_w } \right) \nonumber \\
&= M_w \setminus \left( \mathcal{C}^T(\mathcal{A}) \cap M_w \right).
\label{eq8}
\end{align}
Since $\mathcal{T} = \varnothing$, then we have $\mathcal{C}^T(\mathcal{A}) = \bigoplus\limits_{i \in \mathcal{A}} \mathcal{C}_{i}$. Therefore, the cardinality of the set $\mathcal{S} \cap M_w$ can be written as follows:
\begin{align}
|\mathcal{S} \cap M_w| &= |M_w| - \left| \left( \bigoplus\limits_{i \in \mathcal{A}} \mathcal{C}_{i} \cap M_w \right) \right| \nonumber \\
&= |M_w| - \left| \bigoplus\limits_{i \in \mathcal{A}} \left( \mathcal{C}_{i} \cap M_w \right) \right| \nonumber \\
&= |M_w| - \sum_{i \in \mathcal{A}} | \mathcal{C}_{i} \cap M_w |.
\label{eq2}
\end{align}

Note that, the first term in \eref{eq2} is constant with respect to the set of transmitting devices and the transmitted packet combinations. Thus, this term is be ignored in the optimization problem. Clearly, we have $|\mathcal{A} \cap M_w | = \sum_{i \in \mathcal{A}} |\{i\} \cap M_w|$. Hence, the optimization problem \eref{eq3} can be reformulated as follows:
\begin{align}
A^* &= \arg \max_{ \mathcal{A} \in \mathcal{I} } \sum_{i \in \mathcal{A}} \left( |\mathcal{C}_{i} \cap M_w| - |\{i\}\cap M_w| + y_{i}(\kappa_{i}^*) \right)
\label{eqth31}
\end{align}

The rest of the section shown that there is a one to one mapping between the set of feasible transmitting devices (i.e., $\mathcal{A} in \mathcal{I}$) and the set of cliques in the cooperation graph. Finally, to conclude the proof, we note that the weight of the clique is equivalent to the objective function of \eref{eqth31}.
We now prove that the solution of this problem is equivalent to the maximum weight clique in the cooperation graph where the weight of each vertex is:

Let $\mathcal{A} \in \mathcal{I}$ be any combination of transmitting devices satisfying the non-interference constrain. We show that it can be represented by a clique $\kappa$ in the cooperation graph. Let $v_i$ be the vertices associated with the devices in $, \ i \in \mathcal{A}$. By definition of $\mathcal{I}$ we have $\mathcal{C}_{i} \cap \mathcal{C}_{j} = \varnothing,\ \forall\ i \neq j \in \mathcal{A}$. Therefore, vertices $v_i$ and $v_j$ are connected which concludes that $\kappa$ is a clique in the cooperation graph.

In a similar way, let $\kappa$ be a clique in the cooperation graph associated with the set of transmitting devices $\mathcal{A}$. Since all the nodes are pairwise connected then $\mathcal{C}_{i} \cap \mathcal{C}_{j} = \varnothing,\ \forall\ v_i \neq v_j \in \kappa$ and hence $\mathcal{A} \in \mathcal{I}$. Therefore, there a one to one mapping between the set of feasible schedules $\mathcal{I}$ and the set of cliques in the cooperation graph.

Finally, let $\kappa$ be a clique. The weight can be expressed as follows:
\begin{align}
w(\kappa) &= \sum_{v_i \in \kappa} w(v_i) \nonumber \\
&= \sum_{v_i \in \kappa } \left( |\mathcal{C}_{i} \cap M_w| - |\{i\}\cap M_w| + y_{i}(\kappa_{i}^*) \right) .
\label{eqth32}
\end{align}
The weight in \eref{eqth32} is the objective function illustrated in \eref{eqth31}. As a conclusion, the maximum weight clique in the cooperation graph is the optimal solution to the optimization problem \eref{eq3} which conclude our proof.

\section{Proof of \lref{lem1}}\label{ap4}

To proof the uniqueness of the decomposition, this section first introduces an algorithm to perform the decomposition. Afterward, the rest of the section shows that the decomposition is unique.

\begin{algorithm}[t]
\begin{algorithmic}
\REQUIRE $\mathcal{A}, \mathcal{C}_{i}, \forall\ i \ in \mathcal{A}$.
\STATE Initialize $\mathcal{Z} = \varnothing$.
\STATE Initialize $k = 0$.
\WHILE{$\mathcal{A} \neq \varnothing$ }
\STATE Set $k \leftarrow k + 1$.
\STATE Initialize $Z_k = \varnothing$.
\STATE Set $Z_k \leftarrow \{ a \},\ a \in \mathcal{A}$.
\STATE Set $\mathcal{A} \leftarrow \mathcal{A} \setminus \{ a \}$.
\STATE Initialize $t = \TRUE$.
\WHILE{$t = \TRUE$}
\STATE Set $t \leftarrow \FALSE$.
\FORALL{ $b \in \mathcal{A}$}
\IF {$\mathcal{C}_b \cap \mathcal{C}^T(Z_k) \neq \varnothing$}
\STATE Set $Z_k \leftarrow Z_k \cup \{b\}$.
\STATE Set $\mathcal{A} \leftarrow \mathcal{A} \setminus \{ b \}$.
\STATE Set $t \leftarrow \TRUE$.
\ENDIF
\ENDFOR
\ENDWHILE
\ENDWHILE
\STATE Set $\mathcal{Z} \leftarrow \{Z_1,\ \cdots,\ Z_{k} \}$
\end{algorithmic}
\caption{Cluster Construction}
\label{algo1}
\end{algorithm}

To generate a clustering $\mathcal{Z}$ associated with the set of transmitting devices $\mathcal{A}$ respecting the conditions \eref{eq14}, \eref{eq5} and \eref{eq6} simultaneously, this section proposes a sequential constructing method. A cluster $Z_1$ is first generated by including an arbitrary element of the set $\mathcal{A}$. Afterward, all the remaining devices that are interfering with the cluster $Z_1$ are added to it. It is easy to see that such method of construction ensures that constraint \eref{eq6} holds. After this step all the remaining devices in $\mathcal{A}$ are not interfering with $Z_1$. A second cluster $Z_2$ is generated by including one of the transmitting devices in $\mathcal{A}$ and all the interfering devices added to it. Note that all the devices in $Z_2$ are not interfering with devices in $Z_1$. Therefore, $Z_1$ and $Z_2$ are not interfering and by extension the clustering $\mathcal{Z}$ satisfy the constraint \eref{eq5}. The process is repeated until $\mathcal{A} = \varnothing$ which ensures that constraint \eref{eq14} is satisfied. Therefore, the algorithm proves the existence of a clustering $\mathcal{Z}$ respecting the conditions \eref{eq14}, \eref{eq5} and \eref{eq6} simultaneously for all combination of transmitting devices $\mathcal{A}$. The steps of the algorithm are summarized in \algref{algo1}.

Now we proof uniqueness of such decomposition. Let $\mathcal{A}$ be a set of transmitting devices and let $\mathcal{Z}$ and $\mathcal{Z}^{\prime}$ be two distinct clustering of the set $\mathcal{A}$ satisfying the constraints \eref{eq14}, \eref{eq5} and \eref{eq6} simultaneously. From the constraint \eref{eq14}, we have:
\begin{align}
\bigoplus_{Z \in \mathcal{Z}} Z = A = \bigoplus_{Z^{\prime} \in \mathcal{Z^{\prime}}} Z^{\prime}.
\end{align}
Note that this condition implies only that the clustering have the same number of devices and not necessarily the same clusters $Z$. Since $\mathcal{Z} \neq \mathcal{Z}^{\prime}$, given that both clustering satisfy \eref{eq14} then $\exists\ i \in Z \cap Z^{\prime}$ with $Z \in \mathcal{Z},\ Z^{\prime} \in \mathcal{Z}^{\prime},\ Z \neq Z^{\prime}$.

Let $K$ be the set constructed by first including the device $i$. Each device $k$ in the set $\mathcal{A}$ is included in $K$ if $\mathcal{C}_{k} \cap \mathcal{C}^T(K) \neq \varnothing$. When no more devices can be added the process stop. Constraint \eref{eq5} can be written as:
\begin{align}
\mathcal{C}^T(Z) \cap \mathcal{C}^T(\mathcal{A} \setminus Z ) = \varnothing,\ \forall \ Z \in \mathcal{A}.
\end{align}
From \eref{eq5}, we can clearly see that $K \subseteq Z$ and $K \subseteq Z^{\prime}$. Assume that $K \subset Z$, then from \eref{eq6}, we have:
\begin{align}
K \subset Z \Leftrightarrow \mathcal{C}^T(Z \setminus K) \cap \mathcal{C}^T(K) \neq \varnothing.
\end{align}
From the construction of the set $K$, we have:
\begin{align}
\mathcal{C}^T(A \setminus K) \cap \mathcal{C}^T(K) = \varnothing.
\end{align}
Since $\mathcal{C}^T(Z \setminus K) \subset \mathcal{C}^T(A \setminus K)$, then $K=Z$. The same reasoning can be done to $Z^\prime$ and thus $Z = Z^\prime$ and $\mathcal{Z} = \mathcal{Z}^{\prime}$. This proves uniqueness of the decomposition and concludes our proof.

\section{Proof of \thref{th4}}\label{ap5}

To prove this theorem, we first develop each term of the optimization problem \eref{reformulation} in order to reformulate the problem in a more tractable form. Exploiting the constraint in the creation of the clusters $Z \in \mathcal{Z}$, we show that the problem can be formulated as an optimization over the individual clusters $Z$ that can be seen as ``virtual" devices in the network. Finally, using the results of \thref{th3}, we conclude that the optimal solution to the optimization problem \eref{reformulation} is the maximum weight clique in the full cooperation graph.

Using an analysis similar to the one done in \eref{eq8}, the set of devices out of the transmission range of the transmitting devices can be written as:
\begin{align}
\mathcal{S} \cap M_w &= M_w \setminus \left( \mathcal{C}^T(\mathcal{Z}) \cap M_w \right) .
\end{align}

Since $\mathcal{C}^T (Z) \cap \mathcal{C}^T (Z^\prime) = \varnothing,\ \forall\ Z \neq Z^\prime \in \mathcal{Z}$ (constraint \eref{eq5} on the generation of the clusters), then we can write $\mathcal{C}^T(\mathcal{Z}) = \bigoplus\limits_{Z \in \mathcal{Z}} \mathcal{C}^T(Z)$. Therefore, the cardinality of the set can be written as:
\begin{align}
|\mathcal{S} \cap M_w| = |M_w| - \sum_{Z \in \mathcal{Z}} | \mathcal{C}^T(Z) \cap M_w |.
\label{eq7}
\end{align}
As for \eref{eq2}, the first term in \eref{eq7} is constant. Therefore, it is removed from the optimization problem. Clearly, we have $|\mathcal{Z} \cap M_w| = \sum\limits_{Z \in \mathcal{Z}} | Z \cap M_w |$. Similarly, the set of devices in interference can be reformulated using the clusters $Z$ as follows:
\begin{align}
\mathcal{T}(\mathcal{Z}) = \bigcup_{Z,Z^\prime \in \mathcal{Z}} \bigcup_{\substack{z \in Z\\ \ z \neq z^\prime \in Z^\prime}} (\mathcal{C}_{z} \cap \mathcal{C}_{z^\prime}) .
\label{eqth41}
\end{align}
From the constraint \eref{eq5}, inter-cluster interference region is empty. In other words, we have:
\begin{align}
\bigcup_{Z \neq Z^\prime \in \mathcal{Z}} \bigcup_{\substack{z \in Z\\ \ z^\prime \in Z^\prime}} (\mathcal{C}_{z} \cap \mathcal{C}_{z^\prime}) = \varnothing .
\label{eqth42}
\end{align}
Substituting \eref{eqth42} in \eref{eqth41} yields the following expression of the interference region:
\begin{align}
\mathcal{T}(\mathcal{Z}) &=\bigcup_{Z \in \mathcal{Z}} \bigcup_{z \neq z^\prime \in Z} (\mathcal{C}_{z} \cap \mathcal{C}_{z^\prime}) \nonumber \\
&= \bigoplus_{Z \in \mathcal{Z}} \mathcal{T}(Z).
\end{align}
Therefore, the cardinality of the set can be expressed as follows:
\begin{align}
|\mathcal{T} \cap M_w| = \sum_{Z \in \mathcal{Z}} | \mathcal{T}(Z) \cap M_w |.
\end{align}
Finally, the optimization problem \eref{reformulation} can be formulated as follows:
\begin{align}
\mathcal{Z}^* &= \arg \max_{ \mathcal{Z} \in \mathfrak{Z} } \sum_{Z \in \mathcal{Z}} |\mathcal{C}^T(Z) \cap M_w| - |Z \cap M_w| \nonumber \\
& \qquad - |\mathcal{T}(Z) \cap M_w| + y(Z)
\end{align}
where the function $y(Z)$ is defined as follows:
\begin{align}
y(Z) = \sum_{i \in Z} \left[ \max_{\kappa_i \in \mathcal{G}_i} \left(\sum_{j \in \tau_i(\kappa_i)} (1 - p_{ij} ) \right) \right]
\end{align}

Since the clusters are non-interfering clusters, they can be seen as new \emph{devices} in the network. Using a methodology very close to the one used in \thref{th3}, the optimal clustering $\mathcal{Z}^*$ is given by the maximal weight clique in the full cooperation graph where the weight of each vertex $v$ representing the cluster $Z$ can be expressed as:
\begin{align}
v =& |\mathcal{C}^T(Z) \cap M_w| - | Z \cap M_w| - |\mathcal{T}(Z) \cap M_w| + y(Z). \nonumber
\end{align}

\section{Proof of \thref{th5}}\label{ap6}

To prove the theorem, it is sufficient to show that all the vertices in the maximum weight clique in the cooperation graph generated in \sref{cooper} are created in the cooperation graph produced by \algref{algo2}. This is because the cooperation graph produced by \algref{algo2} is a sub-graph of the full cooperation graph generated in \sref{cooper}. Let $\mathcal{Z}^*=\{Z^*_1,\ \cdots,\ Z^*_{|\mathcal{Z}^*|}\}$ be the maximum weight clique in the cooperation graph generated in \sref{cooper}. Let $Z^*_k=\{z^k_1,\ \cdots,\ z^k_{|Z^*_k|}\} \in \mathcal{Z}^*$ be a cluster where the element has been arranged such that:
\begin{align}
\mathcal{C}_{z^k_i} \cap \mathcal{C}^T(\{z^k_1,\ \cdots,\ z^k_{i-1}) \neq \varnothing,\ \forall\ 1 \leq i \leq |Z^*_k|.
\end{align}

We prove by induction that all the clusters $v_i=\{z^k_1,\ \cdots,\ z^k_{i}\},\ \forall\ 1 \leq i \leq |Z^*_k|$ are generated by \algref{algo2}. For $i=1$, the cluster (device) $z^k_1$ is included in the graph by construction. Assume that cluster $v_{i-1}$ is included. From the analysis done in \eref{eq10}, device $z^k_i$ should provide a gain to be considered in the cluster. However, this result holds only for clusters that satisfy \eref{eq16}. We generalize the result for the clusters in the maximum weight clique. Note that the generalization is possible only for the maximum weight clique of the graph.

Assume that $w(v_i) \leq w(v_{i-1})$, then we have $w(Z^*_k) \leq w(Z^*_k \setminus \{z^k_i\})$ and the cluster $Z^*_k \setminus \{z^k_i\}$ is combinable with all the clusters $Z^*_i,\ \forall\ 1 \leq i \neq k \leq |\mathcal{Z}^*|$ and hence $\mathcal{Z}=\{Z^*_1,\ \cdots,\ Z^*_{k-1},\ Z^*_k \setminus \{z^k_i\},\ Z^*_{k+1}\ \cdots,\ Z^*_{|\mathcal{Z}^*|}\}$ will yields a higher weight. Since $\mathcal{Z}^*$ is the maximum weight clique then we conclude that $w(v_i) \geq w(v_{i-1})$. The same proof can be done if $w(v_i) \leq w(z^k_i)$ in which case
$\mathcal{Z}=\{Z^*_1,\ \cdots,\ Z^*_{k-1},\ Z^*_k \setminus \{v_{i-1}\},\ Z^*_{k+1}\ \cdots,\ Z^*_{|\mathcal{Z}^*|}\}$ will yields a higher weight. In other words, we have:
\begin{align}
w(v_i) \geq \max (w(v_{i-1}),w(z^k_i)).
\end{align}

Therefore, cluster $v_i$ is created. This concludes that all the clusters in the maximum weight clique are generated in the cooperation graph resulting from \algref{algo2}. All the connections are the same in both graphs. Therefore, the maximum weight clique in the cooperation graph created in \sref{cooper} is the same as the one in the graph generated by \algref{algo2}.

\section{Auxiliary Results} \label{app10}

\subsection{Partition of the Set of devices}

\begin{lemma}
For any set of transmitting devices $\mathcal{A}$, the following sets forms a partition of the set of all devices $\mathcal{M}$:
\begin{align}
&\mathcal{M} = \bigoplus_{i \in \mathcal{A}} \left( \mathcal{O}_{i} \cap M_w \right) \oplus \left( \mathcal{A} \cap M_w \right) \nonumber \\
& \qquad \qquad \oplus \left( \mathcal{T} \cap M_w \right) \oplus \left( \mathcal{S} \cap M_w \right) \oplus \overline{M}_w.
\end{align}
\label{l2}
\end{lemma}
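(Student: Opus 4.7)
The plan is to establish the partition in two stages. First I would show that the four kinds of sets $\{\mathcal{O}_i\}_{i \in \mathcal{A}}$, $\mathcal{A}$, $\mathcal{T}$, and $\mathcal{S}$ form a partition of the whole device set $\mathcal{M}$; then I would intersect this partition with $M_w$ and adjoin $\overline{M}_w$ to recover the partition in the statement. The outer split $\mathcal{M} = M_w \oplus \overline{M}_w$ is immediate from the definition of $M_w$, so essentially all the work lives in the first stage.

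For pairwise disjointness I would argue directly from the definitions. The identities $\mathcal{O}_i \cap \mathcal{A} = \mathcal{O}_i \cap \mathcal{T} = \varnothing$ are built into $\mathcal{O}_i = \mathcal{C}_i \setminus (\mathcal{A} \cup \mathcal{T})$, and $\mathcal{A} \cap \mathcal{T} = \varnothing$ is explicit in the definition of $\mathcal{T}$. For the pairs involving $\mathcal{S}$, the key observation is that any element of $\mathcal{O}_i$ or $\mathcal{T}$ sits inside the coverage zone of some transmitting device, and hence cannot belong to $\mathcal{S}$; the same reasoning, combined with the natural convention that a device belongs to its own coverage zone, forces $\mathcal{A} \cap \mathcal{S} = \varnothing$. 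Finally, if $j$ lay in $\mathcal{O}_i \cap \mathcal{O}_{i'}$ for distinct $i, i' \in \mathcal{A}$, then $j \in \mathcal{C}_i \cap \mathcal{C}_{i'}$ would place $j$ in $\mathcal{T}$, contradicting $\mathcal{O}_i \cap \mathcal{T} = \varnothing$.

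For covering, I would run a short case analysis on an arbitrary $j \in \mathcal{M}$: if $j \in \mathcal{A} \cup \mathcal{T} \cup \mathcal{S}$ we are done, so assume none of these hold. Then $j \notin \mathcal{S}$ supplies at least one transmitter $i \in \mathcal{A}$ with $j \in \mathcal{C}_i$; $j \notin \mathcal{T}$ forces this $i$ to be unique; and $j \notin \mathcal{A}$ together with $j \notin \mathcal{T}$ places $j$ in $\mathcal{C}_i \setminus (\mathcal{A} \cup \mathcal{T}) = \mathcal{O}_i$. Combining covering with disjointness yields $\mathcal{M} = \bigoplus_{i \in \mathcal{A}} \mathcal{O}_i \oplus \mathcal{A} \oplus \mathcal{T} \oplus \mathcal{S}$, and intersecting with $M_w$ together with the complement $\overline{M}_w$ finishes the proof.

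The step I expect to require the most care is the disjointness claim $\mathcal{A} \cap \mathcal{S} = \varnothing$, because the stated definition of $\mathcal{S}$ does not syntactically exclude transmitting devices. I would handle this by explicitly invoking the self-coverage convention $i \in \mathcal{C}_i$, which is implicit in the connectivity model and makes $i \in \mathcal{A}$ immediately satisfy $i \in \mathcal{C}_i$ with $i \in \mathcal{A}$, hence $i \notin \mathcal{S}$. Everything else is a routine unpacking of the definitions.
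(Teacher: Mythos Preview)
Your proposal is correct and follows essentially the same two-stage structure as the paper: first establish that $\{\mathcal{O}_i\}_{i\in\mathcal{A}},\mathcal{A},\mathcal{T},\mathcal{S}$ partition $\mathcal{M}$, then intersect with $M_w$ and adjoin $\overline{M}_w$. The only cosmetic differences are that the paper proves covering via the set identity $\mathcal{C}^T=\bigcup_{i\in\mathcal{A}}\mathcal{O}_i\cup\mathcal{A}\cup\mathcal{T}$ rather than by an element-wise case split, and it asserts $\mathcal{A}\cap\mathcal{S}=\varnothing$ ``by definition'' without isolating the self-coverage convention $i\in\mathcal{C}_i$ that you (rightly) flag as the point needing care.
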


\begin{proof}
We first prove that for any set of transmitting devices $\mathcal{A}$, the following results holds:
\begin{align}
\mathcal{M} = \bigcup_{i \in \mathcal{A}} \mathcal{O}_{i} \cup \mathcal{A} \cup \mathcal{T} \cup \mathcal{S}.
\end{align}

Let $ \mathcal{C}^T = \bigcup\limits_{i \in \mathcal{A}} \mathcal{C}_{i}$ be the total coverage zone of the transmitting devices. By definition of the set $\mathcal{S}$, we have:
\begin{align}
\mathcal{M} = \mathcal{C}^T \cup \mathcal{S}.
\end{align}
Hence the opportunity zone can be written as follows:
\begin{align}
\mathcal{O}_i = \mathcal{C}_i \setminus \left( \mathcal{A} \cup \mathcal{T} \right) , \forall \ i \in \mathcal{A}.
\end{align}
Developing the union of the opportunity zones yields:
\begin{align}
\bigcup_{i \in \mathcal{A}}\mathcal{O}_{i} &= \bigcup_{i \in \mathcal{A}}\mathcal{C}_{i} \setminus \left( \mathcal{A} \cup \mathcal{T} \right) \nonumber \\
\bigcup_{i \in \mathcal{A}} \mathcal{O}_{i} &= \mathcal{C}^T \setminus \left( \mathcal{A} \cup \mathcal{T} \right) \nonumber \\
\mathcal{C}^T &= \bigcup_{i \in \mathcal{A}} \mathcal{O}_{i} \cup \mathcal{A} \cup \mathcal{T}.
\end{align}
Therefore, we obtain the desired result:
\begin{align}
\mathcal{M} = \bigcup_{i \in \mathcal{A}} \mathcal{O}_{i} \cup \mathcal{A} \cup \mathcal{T} \cup \mathcal{S}.
\end{align}

We now prove that the sets are disjoint. By definition of the sets $\mathcal{A}$, $\mathcal{T}$ and $\mathcal{S}$, we have that these sets are pairwise disjoint. Also by definition, we have the set $\mathcal{O}_{i}$ is disjoint from $\mathcal{A}$ and $\mathcal{T},\ \forall i \in \mathcal{A}$. Finally, assume $\exists \ i \in \mathcal{O}_{j} \cap \mathcal{O}_{k}$ for $(j , k) \in \mathcal{A}^2$. We have $\mathcal{O}_{i} \subset \mathcal{C}_{i}$. Therefore, $i \in \mathcal{C}_{j} \cap \mathcal{C}_{k}$. By definition of the interference set $\mathcal{T}$, we have $i \in \mathcal{T}$. By definition of the opportunity zone, we have $\mathcal{O}_{i} \cap \mathcal{T} = \varnothing,\ \forall\ i \in \mathcal{A}$. Hence such $i \in \mathcal{O}_{j} \cap \mathcal{O}_{k}$ does not exist and all the sets $\mathcal{O}_{l}, \forall\ l \in \mathcal{A}$ are pairwise disjoint. Obviously, we have $ ( X \cap Y ) \cap \overline{Y} = \varnothing, \ \forall\ X,Y$. This conclude our proof and we have:
\begin{align}
&\mathcal{M} = \bigoplus_{i \in \mathcal{A}} \left( \mathcal{O}_{i} \cap M_w \right) \oplus \left( \mathcal{A} \cap M_w \right) \nonumber \\
& \qquad \qquad \oplus \left( \mathcal{T} \cap M_w \right) \oplus \left( \mathcal{S}\cap M_w \right) \oplus \overline{M}_w.
\end{align}
\end{proof}

\subsection{No Interference Opportunity Zone}

\begin{lemma}
For any set of transmitting device $\mathcal{A} \in \mathcal{I}$ satisfying the non-interference constraint, we have:
\begin{align}
\mathcal{O}_i(\mathcal{A}) = \mathcal{O}_i = \mathcal{C}_i \setminus \{i\}.
\end{align}
\label{l3}
\end{lemma}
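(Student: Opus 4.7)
The plan is to unpack the definitions of $\mathcal{T}(\mathcal{A})$ and $\mathcal{O}_i(\mathcal{A})$ under the non-interference constraint $\mathcal{A} \in \mathcal{I}$ and show that both the interference set collapses to the empty set and the only transmitting device inside $\mathcal{C}_i$ is $i$ itself. Recall that by the definition of $\mathcal{I}$, we have $\mathcal{C}_m \cap \mathcal{C}_n = \varnothing$ for every distinct pair $(m,n) \in \mathcal{A}^2$, and that $\mathcal{O}_i(\mathcal{A}) = \mathcal{C}_i \setminus \left(\mathcal{A} \cup \mathcal{T}(\mathcal{A})\right)$.

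The first step is to observe that $\mathcal{T}(\mathcal{A}) = \varnothing$. Indeed, the set $\mathcal{T}(\mathcal{A})$ consists of devices $j$ satisfying $j \in \mathcal{C}_m \cap \mathcal{C}_n$ for some distinct $m,n \in \mathcal{A}$. But the non-interference constraint forces every such intersection to be empty, so no such $j$ can exist. This reduces the opportunity zone to $\mathcal{O}_i(\mathcal{A}) = \mathcal{C}_i \setminus \mathcal{A}$.

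The second step is to identify $\mathcal{C}_i \cap \mathcal{A}$ with $\{i\}$ when $i \in \mathcal{A}$. Since each device is within its own transmission range (i.e.\ $i \in \mathcal{C}_i$ by convention $c_{ii}=1$), the inclusion $\{i\} \subseteq \mathcal{C}_i \cap \mathcal{A}$ is immediate. For the reverse, suppose $j \in \mathcal{C}_i \cap \mathcal{A}$ with $j \neq i$. Then $j \in \mathcal{C}_i$ and, by self-connectivity, $j \in \mathcal{C}_j$, so $j \in \mathcal{C}_i \cap \mathcal{C}_j$; this contradicts $\mathcal{A} \in \mathcal{I}$, hence no such $j$ exists. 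Combining the two steps yields $\mathcal{O}_i(\mathcal{A}) = \mathcal{C}_i \setminus \{i\}$, which is independent of $\mathcal{A}$ as claimed.

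There is no real obstacle here; the statement is essentially a bookkeeping consequence of $\mathcal{T}(\mathcal{A})=\varnothing$ together with the pairwise disjointness of coverage zones among transmitters. The only subtlety worth flagging explicitly in the write-up is the self-connectivity convention $i \in \mathcal{C}_i$, which is what allows the $\mathcal{A}$-restriction of $\mathcal{C}_i$ to be pinned down to the singleton $\{i\}$ rather than being vacuous.
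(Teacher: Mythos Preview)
Your proof is correct and follows essentially the same two-step approach as the paper: first use the non-interference constraint to conclude $\mathcal{T}(\mathcal{A})=\varnothing$, reducing $\mathcal{O}_i(\mathcal{A})$ to $\mathcal{C}_i\setminus\mathcal{A}$, and then use self-connectivity together with the pairwise disjointness of coverage zones in $\mathcal{I}$ to show $\mathcal{C}_i\cap\mathcal{A}=\{i\}$. Your write-up is in fact slightly cleaner than the paper's, which derives the contradiction by placing $j$ in $\mathcal{T}(\mathcal{A})$ (technically ill-typed since $j\in\mathcal{A}$), whereas you appeal directly to the defining property of $\mathcal{I}$.
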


\begin{proof}
Let $\mathcal{A} \in \mathcal{I}$ satisfying the non interference constraint, we have by definition of the interference set $\mathcal{T}(\mathcal{A}) = \varnothing$. Hence, the expression of the opportunity zone $\mathcal{O}_i(\mathcal{A})$ of any device $i \in \mathcal{A}$ becomes:
\begin{align}
\mathcal{O}_i(\mathcal{A}) &= \mathcal{C}_i \setminus (\mathcal{A} \cup \mathcal{T}(\mathcal{A})) \nonumber \\
&= \mathcal{C}_i \setminus \mathcal{A}.
\end{align}

We now prove that $(\mathcal{C}_i\setminus \{i\}) \cap \mathcal{A} = \varnothing$. Assume $\exists \ j \in (\mathcal{C}_i\setminus \{i\}) \cap \mathcal{A}$. Then, by definition of the coverage zone, we have $j \in \mathcal{C}_j$ and thus $j \in \mathcal{C}_j \cap (\mathcal{C}_i\setminus \{i\}) \in \mathcal{T}(\mathcal{A})$. However, by definition of the set $\mathcal{A}$, the interference region is empty $\mathcal{T}(\mathcal{A}) = \varnothing$. Therefore, such $j \in (\mathcal{C}_i\setminus \{i\}) \cap \mathcal{A}$ does not exist, which conclude our proof and we have:
\begin{align}
\mathcal{O}_i(\mathcal{A}) = \mathcal{O}_i = \mathcal{C}_i\setminus \{i\}.
\end{align}
\end{proof}

\end{document}